
\documentclass[a4paper,11pt]{article}
\usepackage{a4wide}
\usepackage{amsthm}
\newtheorem{theorem}{Theorem}
\newtheorem{corollary}{Corollary}
\newtheorem{lemma}{Lemma}
\usepackage[utf8]{inputenc}

\usepackage[noadjust]{cite}
\usepackage{amsmath,amssymb,nicefrac,graphicx}
\usepackage{tikz}
\usetikzlibrary{matrix}
\usetikzlibrary{positioning}
\usetikzlibrary{decorations.pathreplacing}

\usepackage{hyperref}
%
%
%
%
%
%
%
%
%
%


\usepackage{comment}


\RequirePackage{xcolor}
\usepackage{soul} 
\usepackage{framed} 
\definecolor{shadecolor}{rgb}{1,0.8,0.2} 

\usepackage[inline,nomargin,author=,draft]{fixme}
\fxusetheme{color}



\usepackage{subfig}
\captionsetup{lofdepth=2}

\frenchspacing



\newcommand{\stringset}{\mathcal{S}}

\newcommand{\sacc}{\mathsf{access}}
\newcommand{\srep}{\mathsf{replace}}
\newcommand{\sins}{\mathsf{insert}}
\newcommand{\sdel}{\mathsf{delete}}
\newcommand{\ssplit}{\mathsf{split}}
\newcommand{\sconcat}{\mathsf{concat}}

\newcommand{\psum}{\mathsf{sum}}
\newcommand{\psearch}{\mathsf{search}}
\newcommand{\pupdate}{\mathsf{update}}

\newcommand{\pentrydivide}{\mathsf{divide}}
\newcommand{\pentrymerge}{\mathsf{merge}}
\newcommand{\pinsert}{\mathsf{insert}}
\newcommand{\pdelete}{\mathsf{delete}}

\newcommand{\pdivide}{\mathsf{divide}}
\newcommand{\pmerge}{\mathsf{merge}}

\newcommand{\updbit}{\delta}
\newcommand{\updabs}{\Delta}


\newcommand{\rank}{\mathrm{rank}}

\renewcommand{\succ}{\ensuremath{\mathsf{succ}}}
\newcommand{\pred}{\ensuremath{\mathsf{pred}}}
\newcommand{\trep}{\ensuremath{\mathsf{rep}}}
\newcommand{\trank}{\ensuremath{\mathsf{rank}}}
\newcommand{\tselect}{\ensuremath{\mathsf{select}}}
\newcommand{\tmember}{\ensuremath{\mathsf{member}}}

\makeatletter
\def\@fnsymbol#1{\ensuremath{\ifcase#1\or \star\or \dagger\or
   \mathsection\or \mathparagraph\or \ddagger\or \|\or \star\star\or \dagger\dagger
   \or \ddagger\ddagger \else\@ctrerr\fi}}
\renewcommand{\thefootnote}{\fnsymbol{footnote}}
\makeatother

\title{Dynamic Relative Compression, Dynamic Partial Sums, and Substring Concatenation\footnote{An extended abstract appeared in the proceedings of the 27th International Symposium on Algorithms and Computation (ISAAC).}}

\author{
    Philip Bille   \and Patrick Hagge Cording  \and Inge Li G{\o}rtz  
    \and Frederik Rye Skjoldjensen\and Hjalte Wedel Vildh{\o}j\and S{\o}ren Vind
 }

\date{\small Technical University of Denmark
}

\begin{document}
\pagestyle{plain}	

\maketitle
\renewcommand{\thefootnote}{\arabic{footnote}}
\setcounter{footnote}{0}

\begin{abstract}
\noindent Given a static reference string $R$ and a source string $S$, a relative compression of $S$ with respect to $R$ is an encoding of $S$ as a sequence of references to substrings of $R$. Relative compression schemes are a classic model of compression and have recently proved very successful for compressing highly-repetitive massive data sets such as genomes and web-data. We initiate the study of relative compression in a dynamic setting where the compressed source string $S$ is subject to edit operations. The goal is to maintain the compressed representation compactly, while supporting edits and allowing efficient random access to the (uncompressed) source string. We present new data structures that achieve optimal time for updates and queries while using space linear in the size of the optimal relative compression, for nearly all combinations of parameters. We also present solutions for restricted and extended sets of updates. To achieve these results, we revisit the dynamic partial sums problem and the substring concatenation problem. We present new optimal or near optimal bounds for these problems. Plugging in our new results we also immediately obtain new bounds for the string indexing for patterns with wildcards problem and the dynamic text and static pattern matching problem. 
\end{abstract}

\section{Introduction}
Given a static reference string $R$ and a source string $S$, a \emph{relative compression of $S$ with respect to $R$} is an encoding of $S$ as a sequence of references to substrings of $R$. Relative compression (or \emph{external macro compression}) is a classic model of compression defined by Storer~and~Szymanski~\cite{SS1978,Storer1982} in 1978 and has since been used in a wide range of compression scenarios~\cite{DBLP:journals/todaes/LiaoDK99, DBLP:journals/dafes/LiaoDKTW98, kuruppu2010relative, kuruppu2011optimized, chern2012reference, do2014fast, hoobin2011relative}. To compress massive highly-repetitive data sets, such as biological sequences and web collections, relative compression has been shown to be very practical~\cite{kuruppu2010relative, kuruppu2011optimized, hoobin2011relative}.

Relative compression is often applied to compress multiple similar source strings. In such settings relative compression is superior to compressing the source strings individually. For instance, human genomes are 99\% similar and hence relative compression might be used to compress a large collection of sequenced genomes using, e.g., the human reference genome as the static reference string. We focus on the case of compressing a single source string, but our results trivially generalize to compressing multiple source strings.

In this paper we initiate the study of relative compression in a \emph{dynamic setting}, where the compressed source string $S$ is subject to edit operations (insertions, deletions, and replacements of single characters). The goal is to maintain the compressed representation compactly, while supporting edits and allowing efficient random access to the (uncompressed) source string. Efficient data structures supporting these operations allow us to avoid costly recompression of massive data sets after updates. 

We provide the first non-trivial bounds for this problem. We present new data structures that achieve \emph{optimal} time for updates and queries while using space linear in the size of the \emph{optimal} relative compression, for nearly all combinations of parameters. We also present solutions for restricted and extended sets of updates.

To achieve these results, we revisit the \emph{dynamic partial sums problem} and the \emph{substring concatenation problem}. We present new optimal or near optimal bounds for both of these problems (see detailed discussion below). Furthermore, plugging in our new results immediately leads to new bounds for the \emph{string indexing for patterns with wildcards problem}~\cite{LNV2014,BGVV2014} and the \emph{the dynamic text and static pattern matching problem}~\cite{amir2007dynamic}.

\subsection{Dynamic Relative Compression}
Given a \emph{reference string} $R$ and a \emph{source string} $S$,  a \emph{relative compression of $S$ with respect to $R$} is a sequence $C=(i_1,j_1),...,(i_{|C|},j_{|C|})$ such that $S = R[i_1,j_1] \cdots R[i_{|C|},j_{|C|}]$. We call $C$ a \emph{substring cover} for $S$. The substring cover is \emph{optimal} if $|C|$ is minimum over all relative compressions of $S$ with respect to $R$. The \emph{dynamic relative compression problem} is to maintain a relative compression of $S$ under the following operations. Let $i$ be a position in $S$ and $\alpha$ be a character.

\begin{description}
\item[$\quad\sacc(i)$:] return the character $S[i]$,
\item[$\quad\srep(i, \alpha)$:] change $S[i]$ to character $\alpha$,
\item[$\quad\sins(i, \alpha)$:] insert character $\alpha$ before position $i$ in $S$,
\item[$\quad\sdel(i)$: ] delete the character at position $i$ in $S$.
\end{description}

\noindent Note that operations $\sins$ and $\sdel$ change the length of $S$ by a single character. In all bounds below, the $\sacc(i)$ operation extends to decompressing an arbitrary substring of length $\ell$ using only $O(\ell)$ additional time.

\paragraph{Our Results}
Throughout the paper, let $r$ be the length of the reference string $R$, $N$ be the length of the (uncompressed) string $S$, and $n$ be the size of an optimal relative compression of $S$ with regards to $R$. All of the bounds mentioned below and presented in this paper hold for a standard unit-cost RAM with $w$-bit words with standard arithmetic and logical operations on a word. This means that the algorithms can be implemented directly in standard imperative programming languages such as C~\cite{KR78} or C++~\cite{Sto00}. An index into $R$ or $S$ can be stored in a single word and hence $w \geq \log (n+r)$.

\begin{theorem}\label{thm:main}
Let $R$ and $S$ be a reference and source string of lengths $r$ and $N$, respectively, and let $n$ be the length of the optimal substring cover of $S$ by $R$. Then, we can solve the dynamic relative compression problem supporting $\sacc$, $\srep$, $\sins$, and $\sdel$
\begin{itemize}
\item[(i)] in $O(n + r)$ space and $O\left(\frac{\log n}{\log \log n} + \log \log r\right)$ time per operation, or
\item[(ii)] in $O(n + r \log^\epsilon r)$ space and $O\left(\frac{\log n}{\log \log n}\right)$ time per operation, for any constant $\epsilon>0$.
\end{itemize}
\end{theorem}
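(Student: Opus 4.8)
The plan is to represent the current substring cover $C = (i_1,j_1),\dots,(i_m,j_m)$ as a sequence of phrases and reduce all four operations to a constant number of queries against two auxiliary structures: a dynamic partial sums structure over the phrase lengths $\ell_k = j_k - i_k + 1$, and a static structure over $R$ answering substring concatenation queries. First I would store $\ell_1,\dots,\ell_m$ in the dynamic partial sums structure, which supports (a) a search that, given a position $p$ in $S$, returns the phrase index $k$ and offset $o$ with $S[p]=R[i_k+o]$, and (b) insertion, deletion, and update of individual phrase lengths, each in $O(\log m/\log\log m)$ time. Then $\sacc(i)$ is a single search followed by one lookup in $R$, and decompressing a length-$\ell$ substring costs $O(\ell)$ additional time by walking consecutive phrases. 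The substring concatenation structure will, given two substrings $R[a,b]$ and $R[c,d]$, decide whether $R[a,b]R[c,d]$ occurs in $R$ and, if so, return an occurrence $R[e,f]$; I invoke it only to test whether two adjacent phrases can be replaced by one.

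The crux will be to keep $m = O(n)$ while spending only constant extra work per update, and for this I would maintain the invariant that no two adjacent phrases are \emph{mergeable}, i.e., for every $k$ the string $R[i_k,j_k]R[i_{k+1},j_{k+1}]$ is not a substring of $R$. I claim this invariant already forces $m \le 2n$. Writing the cover boundaries as $0=b_0<\dots<b_m=N$ and the boundaries of an optimal cover as $0=b^*_0<\dots<b^*_n=N$, the key observation is that for each internal $k$ the open interval $(b_{k-1},b_{k+1})$ must contain an optimal boundary: otherwise $b^*_{t-1}\le b_{k-1}<b_{k+1}\le b^*_t$ for some $t$, so $S[b_{k-1}+1,b_{k+1}]$ is a substring of the optimal phrase $S[b^*_{t-1}+1,b^*_t]$ and hence of $R$, contradicting non-mergeability of phrases $k$ and $k+1$. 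Since the intervals $(b_0,b_2),(b_2,b_4),\dots$ are pairwise disjoint and each contains a distinct internal optimal boundary, of which there are only $n-1$, we get $m=O(n)$ and thus the claimed space.

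To serve an update I would first locate the affected phrase by a partial sums search and split it around the edited position, replacing one phrase by at most three (a left remainder, a single-character phrase referencing $\alpha$ for $\srep$/$\sins$, and a right remainder) at the cost of $O(1)$ partial sums updates. This touches only a constant-size window of phrases, after which I restore the invariant by repeatedly testing adjacent pairs in the window with substring concatenation queries and merging any mergeable pair into the single phrase $R[e,f]$ it returns. The reason this cascade stays local — the point I would check most carefully — is that the boundaries just outside the window were non-mergeable before the edit and remain so afterwards: if a product $R[\cdot]R[\cdot]$ was already not a substring of $R$, then extending one operand by merging inside the window only lengthens a string that still contains the forbidden one as a prefix or suffix, so it still cannot occur in $R$. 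Hence merging never propagates past the untouched boundaries, and each update performs $O(1)$ partial sums operations and $O(1)$ substring concatenation queries.

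Finally I would assemble the bounds. With the $O(r)$-space substring concatenation structure answering queries in $O(\log\log r)$ time we obtain bound (i): $O(n+r)$ space and $O(\log n/\log\log n + \log\log r)$ per operation, using $m=O(n)$ so that $\log m = \Theta(\log n)$. Replacing it with the $O(r\log^\epsilon r)$-space structure answering queries in $O(1)$ time removes the $\log\log r$ term and yields bound (ii). The main obstacle is the $O(n)$-size guarantee together with the constant-cascade argument above; the time bounds then follow mechanically from the two subproblem bounds, and I expect the $\log n/\log\log n$ term to be unavoidable, since it already arises as a lower bound for dynamic partial sums.
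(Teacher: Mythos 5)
Your proposal is correct and follows essentially the same route as the paper: maintain a maximal (non-mergeable) cover, show it is a 2-approximation of the optimal cover, locate and split blocks via dynamic partial sums, and restore maximality inside a constant-size window with $O(1)$ substring concatenation queries, plugging in the two substring concatenation trade-offs to obtain (i) and (ii). Your boundary-interval counting is just a rephrasing of the paper's Lemma~\ref{lem:maxcover} (at most two maximal blocks can start inside any block of an arbitrary cover), and your argument that merges cannot cascade past the untouched boundaries is the same locality argument the paper uses.
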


\noindent These are the first non-trivial bounds for the problem. Together, the bounds are optimal for most natural parameter combinations. In particular, any data structure for a string of length $N$ supporting $\sacc$, $\sins$, and $\sdel$ must use $\Omega(\log N/\log \log N)$ time in the worst-case regardless of the space~\cite{fredman1989cell} (this is called the \emph{list representation problem}). Since $n \leq N$, we can view  $O(\log n /\log \log n)$ as a compressed version of the optimal time bound that is always $O(\log N/\log \log N)$ and better when $S$ is compressible. Hence, Theorem~\ref{thm:main}(i) provides a linear-space solution that achieves the compressed time bound except for an $O(\log \log r)$ additive term. Note that whenever $n \geq (\log r)^{\log^\epsilon \log r}$, for any $\epsilon>0$, the $\log n/\log \log n$ term dominates the query time and we match the compressed time bound. Hence, Theorem~\ref{thm:main}(i) is only suboptimal in the special case when $n$ is almost exponentially smaller than $r$. In this case, we can use Theorem~\ref{thm:main}(ii) which always provides a solution achieving the compressed time bound at the cost of increasing the space to $O(n + r\log^\epsilon r)$.

We note that dynamic compression under different models of compression has been studied extensively~\cite{grossi03, Ferragina04succinctrepresentation, ferragina2005indexing, Sadakane2006, Ferragina2007, Jansson12_cram, navarro2013optimal}. However, all of these results require space dependent on the size of the original string and hence cannot take full advantage of highly-repetitive data.

\subsection{Dynamic Partial Sums} The \emph{partial sums problem} is to maintain an array $Z[1..s]$ under the following operations.

\begin{description}
 \item[$\quad\psum(i)$:] return $\sum_{j=1}^i Z[j]$,
 \item[$\quad\pupdate(i, \updabs)$:] set $Z[i] = Z[i] + \updabs$,
 \item[$\quad\psearch(t)$:] return $1 \leq i \leq s$ such that $\psum(i-1) < t \leq \psum(i)$. To ensure well-defined answers, we require that $Z[i] \geq 0$ for all $i$. 
\end{description}

\noindent The partial sums problem is a classic and well-studied problem~\cite{dietz1989optimal,raman2001succinct,husfeldt2003new,fredman1989cell,hon2011a,husfeldt1996lower,fenwick1994new,puaatracscu2004tight}. In our context, we consider the problem in the word RAM model, where each array entry stores a $w$-bit integer and the element of the array can be changed by $\updbit$-bit integers, i.e., the argument $\updabs$ can be stored in $\updbit$ bits. In this setting, Pătraşcu and Demaine~\cite{puaatracscu2004tight} gave a linear-space data structure with $\Theta(\log s / \log (w / \updbit))$ time per operation. They also gave a matching lower bound.

We consider the following generalization supporting dynamic changes to the array. The \emph{dynamic partial sums problems} is to additionally support the following operations.

\begin{description}
 \item[$\quad\pinsert(i, \updabs)$:] insert a new entry in $Z$ with value $\updabs$ before $Z[i]$,
 \item[$\quad\pdelete(i)$:] delete the entry $Z[i]$ of value at most $\updabs$.
 \item[$\quad\pentrymerge(i)$:] replace entry $Z[i]$ and $Z[i+1]$ with a new entry with value $Z[i] + Z[i+1]$.
 \item[$\quad\pentrydivide(i, t)$:], where $0 \leq t \leq Z[i]$. Replace entry $Z[i]$ by two new consecutive entries with value $t$ and $Z[i] - t$, respectively. 
\end{description}

\noindent Hon~et~al.~\cite{hon2011a} and Navarro and Sadakane~\cite{navarro2014fully} presented optimal solutions for this problem in the case where the entries in $Z$ are at most polylogarithmic in $s$ (they did not explicitly consider the $\pentrymerge$ and $\pentrydivide$ operation). 

\paragraph{Our Results}
We show the following improved result. 

\begin{theorem}\label{thm:partialsums}
 Given an array of length $s$ storing $w$-bit integers and parameter $\updbit$, such that $\updabs < 2^{\updbit}$,  we can solve the dynamic partial sums problem supporting $\psum$, $\pupdate$, $\psearch$, $\pinsert$, $\pdelete$, $\pentrymerge$, and $\pentrydivide$ in linear space and $O(\log s / \log (w / \updbit))$ time per operation.
 \end{theorem}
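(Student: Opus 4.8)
The plan is to reduce the dynamic partial sums problem to the classic (static-structure) partial sums problem solved by Pătraşcu and Demaine, while absorbing the structural operations ($\pinsert$, $\pdelete$, $\pmerge$, $\pdivide$) into a balanced-tree framework. The main idea is to store the array entries at the leaves of a weight-balanced B-tree of degree roughly $B = (w/\updbit)^{\Theta(1)}$, so that the tree has height $O(\log s / \log(w/\updbit))$. Each internal node of degree $B$ will maintain a small local partial sums structure over the subtree-sums of its children, and the key to hitting the target time bound is that all of this node-local work must run in $O(1)$ time per node visited.

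First I would set the branching factor so that each internal node has between $B$ and $2B$ children, making the height exactly $O(\log_B s) = O(\log s / \log(w/\updbit))$ as required. To answer $\psum(i)$ and $\psearch(t)$, I would walk root-to-leaf, and at each node use a word-packed representation of the children's subtree sums to compute, in constant time, the prefix sum up to the relevant child (for $\psum$) or to locate the child whose sum-interval contains the query value (for $\psearch$). The essential observation is that the values being manipulated inside one node can be bounded and packed into $O(1)$ machine words: since each update changes a value by at most $\updbit$ bits, and we can rescale/offset the locally stored quantities, a single node's $B$ subtree-contributions fit in a constant number of words, allowing the needed prefix-sum and search via multiplication, masking, and precomputed lookup tables (a standard word-parallel / $Q$-heap–style technique). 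This makes $\psum$, $\psearch$, and $\pupdate$ cost $O(1)$ per level, hence $O(\log s / \log(w/\updbit))$ overall.

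Next I would handle the dynamic operations. The structural updates $\pinsert$ and $\pdelete$ change the number of leaves, so I would use the weight-balanced B-tree machinery (split/merge of overflowing/underflowing nodes along the search path), which amortizes — and with standard rebuilding can be made worst-case — to $O(1)$ node restructurings per level, again giving the target bound. The operations $\pmerge(i)$ and $\pdivide(i,t)$ act on adjacent leaves and are localized: $\pmerge$ deletes two leaves and inserts their sum, while $\pdivide$ replaces one leaf by two summing to the original value; both are realized by a constant number of leaf insert/delete operations on the same search path, so they inherit the same bound. Throughout, each node's packed summary must be updated in $O(1)$ time after a child's subtree-sum changes, which is again a constant-word recomputation using the precomputed tables.

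The hard part will be the $O(1)$-per-node word-packed arithmetic: I must argue that the quantities stored at a node genuinely fit in $O(1)$ words and that prefix-sum computation and threshold-search over $B$ packed fields can be done in constant time despite $w$-bit (rather than $\updbit$-bit) stored values. The resolution is that a node need not store full $w$-bit subtree sums; it suffices to store them relative to a per-node base value, so that the \emph{spread} of values within a single node — which is what the word-parallel operations must resolve — is governed by the update magnitude $\updbit$ and the subtree size, keeping the packed representation within $O(1)$ words and the table sizes within $s^{o(1)}$. Verifying these bit-budget bounds, and confirming that overflow from the base value is handled lazily (periodic recomputation of bases during rebuilds), is the delicate accounting that drives the whole construction; once it is in place, the stated time and linear-space bounds follow directly from the tree height and the $O(1)$ per-node cost.
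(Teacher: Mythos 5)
Your high-level reduction is the same as the paper's: a B-tree of degree $B \approx w/\updbit$ (hence height $O(\log s/\log(w/\updbit))$) whose internal nodes each maintain a constant-time partial-sums structure over the $O(B)$ subtree sums of their children, with $\pinsert$, $\pdelete$, $\pentrymerge$, $\pentrydivide$ realized as a constant number of leaf insertions/deletions handled by standard B-tree rebalancing (the paper invokes Willard's and Dietz's results to make this worst case). The gap is in the step you yourself flag as the hard part: your claim that storing each node's child subtree sums relative to a \emph{single per-node base value} keeps the spread of the packed quantities bounded in terms of the update magnitude $\updbit$ and the subtree size. This is false. The array entries are arbitrary $w$-bit integers---that is the entire point of the theorem (the prior results of Hon et al.\ and Navarro and Sadakane already handle entries polylogarithmic in $s$)---so two adjacent children of a node can have subtree sums differing by $\Theta(2^w)$ regardless of $\updbit$ and regardless of how recently bases were recomputed. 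No single base, rescaling, or lazy overflow handling makes these residuals fit into small packed fields, and a lookup table indexed by a $\Theta(w)$-bit packed word is infeasible in any case.

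What is needed inside each node is a two-level scheme, which is what the paper builds. Following Pătraşcu and Demaine, the prefix sums within a node are greedily partitioned into \emph{runs} of values that differ pairwise by at most $B2^{\updbit}$; the run representatives (full $w$-bit values) are kept in the constant-time dynamic predecessor/rank/select structure of Pătraşcu and Thorup, while only the within-run offsets---which need $O(\log B + \updbit)$ bits each---are packed into a single word and manipulated with Fredman--Willard-style word parallelism. A $\psearch$ then touches the representative structure once plus a constant number of runs in the packed word. Moreover, your proposal never addresses the problem that makes the dynamic case genuinely harder than the static one: $\pinsert$, $\pdelete$, $\pentrymerge$, and $\pentrydivide$ shift the \emph{indices} of all subsequent elements within a node, so the stored representative positions go stale, and updating them naively in the Pătraşcu--Thorup structure costs $O(B)$ per operation. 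The paper's key new ingredient is to represent the set of representative indices by a packed bitstring together with a packed prefix-sum word over it, supporting rank, select, and predecessor queries as well as insertions/deletions (bit shifts and $\pm 1$ range updates) in $O(1)$ time, alongside a local re-selection of representatives when runs are created, merged, or destroyed. Without both of these ingredients the claimed $O(1)$ per-node cost, and hence the theorem, does not follow from your construction.
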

\noindent Note that this bound simultaneously matches the optimal time bound for the standard partial sums problem and supports storing arbitrary $w$-bit values in the entries of the array, i.e., the values we can handle in optimal time are exponentially larger than in the previous results.

To achieve our bounds we extend the static solution by Pătraşcu and Demaine~\cite{puaatracscu2004tight}. Their solution is based on storing a sampled subset of \emph{representative elements} of the array and difference encode the remaining elements. They pack multiple difference encoded elements in words and then apply word-level parallelism to speedup the operations. To support $\pinsert$ and $\pdelete$ the main challenge is to  maintain the representative elements that now dynamically move within the array. We show how to efficiently do this by combining a new representation of representative elements with a recent result by Pătraşcu and Thorup~\cite{patrascu2014dynamic}. Along the way we also slightly simplify the original construction by Pătraşcu and Demaine~\cite{puaatracscu2004tight}.

\subsection{Substring Concatenation}
Let $R$ be a string of length $r$. A \emph{substring concatenation query} on $R$ takes two pairs of indices $(i,j)$ and $(i',j')$ and returns the start position in $R$ of an occurrence of $R[i,j] R[i',j']$, or \texttt{NO} if the string is not a substring of $R$. The \emph{substring concatenation problem} is to preprocess $R$ into a data structure that supports substring concatenation queries.

Amir~et~al.~\cite{amir2007dynamic} gave a solution using $O(r\sqrt{\log r})$ space with query time $O(\log\log r)$, and recently Gawrychowski~et~al.~\cite{gawrychowski2014weighted} showed how to solve the problem in $O(r\log r)$ space and $O(1)$ time. 

\paragraph{Our Results} 
We give the following improved bounds.
\begin{theorem}\label{thm:substringconcat}
Given a string $R$ of length $r$, the substring concatenation problem can be solved in either
\begin{itemize}
\item[(i)] $O(r\log^\epsilon r)$ space and $O(1)$ time, for any constant $\epsilon>0$, or
\item[(ii)] $O(r)$ space and $O(\log \log r)$ time.
\end{itemize}
\end{theorem}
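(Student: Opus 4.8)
The plan is to solve the substring concatenation problem by reducing a query $R[i,j]R[i',j']$ to a weighted ancestor or predecessor search over the suffix tree of $R$, combined with a fast lowest-common-ancestor-style lookup. First I would build the suffix tree $T$ of $R$ together with the generalized suffix tree or, more conveniently, the suffix tree of $R$ and a symmetric structure on the reverse string $R^{\mathrm{rev}}$. The key observation is that $R[i,j]$ corresponds to a locus (an explicit or implicit node) in $T$ obtained by descending $j-i+1$ characters along the path spelled by the suffix starting at position $i$; since we know $i$, this locus is reached by a \emph{weighted ancestor query} on the leaf for suffix $i$, asking for the ancestor at string-depth exactly $j-i+1$. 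Thus each of the two query substrings is first resolved to a node/locus, and the real task is to decide whether their concatenation occurs and, if so, to report a starting position.

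The main idea for testing concatenation is to use the classical characterization: $R[i,j]R[i',j']$ occurs in $R$ starting at some position $p$ iff $R[i,j]$ occurs at $p$ and $R[i',j']$ occurs at $p+(j-i+1)$. I would set up a two-dimensional structure indexed by (locus of the first piece, locus of the second piece), so that after resolving both pieces to loci we can answer with a single lookup. Concretely, I plan to use the suffix tree to represent all occurrences of $R[i,j]$ as the leaves under its locus (a contiguous range in the suffix array), and all valid continuations as occurrences of $R[i',j']$; the concatenation succeeds iff some occurrence of the first piece is followed immediately by an occurrence of the second. This becomes a range-emptiness / range-reporting problem in a grid that can be preprocessed. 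To get the two stated tradeoffs, I would plug in predecessor/weighted-ancestor and two-dimensional range structures at the two operating points: for part (i), use the $O(r\log^{\epsilon} r)$-space weighted ancestor data structure answering in $O(1)$ time (e.g.\ via the fractional-cascading / deterministic-hashing technology underlying constant-time predecessor search on a bounded universe after $O(r\log^\epsilon r)$-space preprocessing) together with a constant-time grid structure; for part (ii), use an $O(r)$-space weighted ancestor structure answering in $O(\log\log r)$ time, so the query time is dominated by the $O(\log\log r)$ term.

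The key steps, in order, are: (1) construct the suffix tree of $R$ (and the needed auxiliary order on suffix positions) in linear space; (2) build a weighted ancestor data structure on the suffix tree so that each input pair $(i,j)$ is mapped to its locus in either $O(1)$ time with $O(r\log^\epsilon r)$ space or $O(\log\log r)$ time with $O(r)$ space; (3) reduce the concatenation test to a two-dimensional orthogonal range-emptiness query by mapping the locus of the first piece to its suffix-array range and the second piece to the range of its occurrences, shifted by the length of the first piece; (4) answer that range query, returning a witness starting position $p$ on success and \texttt{NO} otherwise. Throughout I would verify that the returned $p$ indeed satisfies $p + (j-i+1) - 1 \le r$ and that the concatenated substring lies within $R$, handling the boundary cases where one piece is empty or reaches the end of $R$.

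The hard part will be making both the locus-resolution (weighted ancestor) step and the concatenation-test (range) step simultaneously achieve constant time for part (i) without blowing the space past $O(r\log^\epsilon r)$, since the naive product of two $O(1)$-time structures can incur superlinear space; the delicate point is to share the $\log^\epsilon r$-branching structure between the weighted-ancestor search and the grid search so that the total space stays at $O(r\log^\epsilon r)$. I expect to exploit the fact that $\epsilon$ is an arbitrary positive constant to absorb the extra logarithmic factors, choosing the branching parameter of the underlying search trees as $\log^{\epsilon'} r$ for a suitably small $\epsilon' < \epsilon$. For part (ii) the obstacle is milder: I only need each component to run in $O(\log\log r)$ time within linear space, which follows from standard van~Emde~Boas-style predecessor structures on the suffix positions together with a linear-space weighted ancestor structure.
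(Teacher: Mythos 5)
There is a genuine gap, and it sits exactly at the step you flagged as "the hard part": the reduction to \emph{two-dimensional} orthogonal range emptiness. First, as literally described your grid is query-dependent: the point set $\{(\mathrm{rank}(p),\mathrm{rank}(p+|x|))\}$ depends on the shift $|x|$, so there is no single structure to preprocess. The standard repair (consistent with your mention of $R^{\mathrm{rev}}$) is to store one point per text position $p$, namely (rank of $R[1..p]$ reversed among reversed prefixes, rank of the suffix starting at $p+1$), and to query the product of the range of $x^{\mathrm{rev}}$ and the range of $y$. But even after this repair the bounds you need do not exist. For part (i) you need a constant-time 2D range-emptiness structure in $O(r\log^\epsilon r)$ space; this is not known and is ruled out by Pătraşcu's cell-probe lower bounds, which give an $\Omega(\log\log r)$ query lower bound for 2D range emptiness with $O(r\,\mathrm{polylog}\,r)$ space (indeed, if such a grid structure existed, the prior $O(r\log r)$-space, $O(1)$-time result of Gawrychowski et al.\ and part (i) of this theorem would both be trivial consequences). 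For part (ii), the best known linear-space 2D emptiness structures (Chan--Larsen--Pătraşcu) answer in $O(\log^\epsilon r)$ time, not $O(\log\log r)$; van Emde Boas--style predecessor search is a one-dimensional tool and does not solve the 2D problem. So the 2D route recovers neither branch of the theorem. (A minor point in the opposite direction: weighted ancestor queries on a suffix tree can be answered in $O(1)$ time with \emph{linear} space by Gawrychowski et al., so locus resolution is not where the space/time tension lies.)

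The paper's proof succeeds precisely because it avoids 2D searching. For (i), note that $xy$ occurs in $R$ iff some suffix lies in $S'(x)\cap S(y)$, where $S(y)$ is the set of suffixes with prefix $y$ (a lexicographic interval, computable in $O(1)$ by a weighted ancestor query) and $S'(x)$ is the set of suffixes immediately preceded by an occurrence of $x$. The query is therefore \emph{one-dimensional} range emptiness on the set $\mathrm{rank}(S'(x))$, and 1D emptiness, unlike 2D, does admit $O(1)$ query time in $O(|A|\log^\epsilon r)$ bits (Belazzougui et al.). Since $S'(x)$ cannot be stored for all substrings $x$, the paper builds it only for strings at the tops of heavy paths of the suffix tree; each suffix then appears in at most $\log r$ such sets, for $O(r\log^\epsilon r)$ words overall. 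A query walks from the locus of $x$ along $y$: one constant-time LCP query (via nearest common ancestors) determines where $y$ leaves the current heavy path, and the 1D structure at the top of the next heavy path is queried with the interval of the still-unmatched suffix of $y$. For (ii), the paper uses no range searching at all: a substring concatenation query is exactly two \emph{unrooted LCP queries} (match $x$ from the root, then match $y$ from the resulting location), which Bille et al.\ support in $O(r)$ space and $O(\log\log r)$ time each. If you want to salvage your write-up, replacing the grid step by this asymmetric 1D reduction plus heavy-path routing is the missing idea.
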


\noindent Hence, Theorem~\ref{thm:substringconcat}(i) matches the previous $O(1)$ time bound while reducing the space from $O(r\log r)$ to $O(r\log^\epsilon r)$ and Theorem~\ref{thm:substringconcat}(ii) achieves linear space while using $O(\log \log r)$ time. Plugging in the two solutions into our solution for dynamic relative compression leads  to the two branches of Theorem~\ref{thm:main}. 

To achieve the bound in (i), the main idea is a new construction that efficiently combines compact data structure for 1D range reporting~\cite{belazzougui2010fast} with the recent constant time weighted level ancestor data structure for suffix trees~\cite{gawrychowski2014weighted}. The bound in (ii) follows as a simple implication of another recent result for \emph{unrooted LCP queries}~\cite{BGVV2014} by some of the authors. The substring concatenation problem is a key component in several solutions to the \emph{string indexing for patterns with wildcards problem}~\cite{BGVV2014, cole2004dictionary, LNV2014}, where the goal is to preprocess a string $T$ to support pattern matching queries for patterns with wildcards. Plugging in Theorem~\ref{thm:substringconcat}(i) we immediately obtain the following new bound for the problem. 

\begin{corollary}
Let $T$ be a string of length $t$. For any pattern string $P$ of length $p$ with $k$ wildcards, we can support pattern matching queries on $T$ using $O(t\log^\epsilon t)$ space and $O(p + \sigma^k)$ time for any constant $\epsilon>0$.
\end{corollary}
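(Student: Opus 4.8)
The plan is to treat the corollary as an instantiation of the standard wildcard-indexing framework, where substring concatenation is the workhorse, plugging in Theorem~\ref{thm:substringconcat}(i). At preprocessing time I would build the suffix tree of $T$ together with the substring concatenation data structure of Theorem~\ref{thm:substringconcat}(i) on $R=T$; the suffix tree uses $O(t)$ space and the concatenation structure $O(t\log^\epsilon t)$ space, so the total is $O(t\log^\epsilon t)$ as claimed. I would additionally store, for each character $c$ occurring in $T$, one position $\gamma_c$ with $T[\gamma_c]=c$, and at each suffix-tree node a pointer to an arbitrary leaf in its subtree, so that any locus can be turned into a concrete occurrence interval $T[a,b]$.

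At query time, write $P = P_0\, ?\, P_1\, ?\, \cdots\, ?\, P_k$, where $?$ is a wildcard and $P_0,\dots,P_k$ are the maximal wildcard-free segments. The first step locates each segment: matching $P_m$ downward from the root costs $O(|P_m|)$, so locating all $k+1$ segments costs $O(\sum_m |P_m|)=O(p)$ and yields for each $P_m$ a concrete occurrence interval $T[a_m,b_m]$ (or an early \texttt{NO} if some $P_m$ does not occur in $T$, in which case $P$ cannot occur). The second step stitches the segments together across all assignments of characters to the $k$ wildcards via a depth-first traversal of the tree of partial assignments. A node at depth $m$ stores an occurrence interval $T[\alpha,\beta]$ of the instantiated prefix $P_0 c_1 P_1 \cdots c_m P_m$; to form a child for wildcard character $c$, I extend by the single character $c$ (a substring concatenation of $T[\alpha,\beta]$ with $T[\gamma_c,\gamma_c]$) and then by the next segment (a second concatenation with $T[a_{m+1},b_{m+1}]$). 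Each edge of the assignment tree costs $O(1)$ substring concatenation queries, and since the tree has $1+\sigma+\cdots+\sigma^k=O(\sigma^k)$ nodes, the traversal costs $O(\sigma^k)$ queries, i.e.\ $O(\sigma^k)$ time by Theorem~\ref{thm:substringconcat}(i). Each surviving leaf gives (via one occurrence plus a weighted level ancestor step) the locus of a fully instantiated copy of $P$, from which the matching positions are read off; summing the two steps yields the claimed $O(p+\sigma^k)$ query time.

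The delicate points are bookkeeping rather than new ideas. The branching must range only over characters occurring in $T$ (a wildcard in a real occurrence is forced to such a character), which keeps the branching factor at most $\sigma$ and lets a \texttt{NO} answer from a concatenation query prune an entire subtree of assignments at once; I also need that a query's returned start position, combined with the already-known total length, determines the new interval $T[\alpha',\beta']$ in $O(1)$ so that chaining along a root-to-leaf path works. The main obstacle, and the reason the reduction is nontrivial, is avoiding the naive $O(p\cdot\sigma^k)$ cost of re-matching each segment inside every assignment: reusing the precomputed intervals $T[a_m,b_m]$ together with \emph{constant-time} substring concatenation collapses the per-assignment cost to $O(1)$ and isolates all segment matching into the one-time $O(p)$ term. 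Finally, since the characters at the wildcard positions of any concrete occurrence are fixed by $T$, distinct assignments have disjoint occurrence sets, so the per-assignment subtree sizes can simply be summed for counting within $O(p+\sigma^k)$, while listing all occurrences adds only the unavoidable $O(\mathit{occ})$ output term.
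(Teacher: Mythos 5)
Your proof is correct and takes essentially the same route as the paper: the paper obtains this corollary by plugging the $O(1)$-time substring concatenation structure of Theorem~\ref{thm:substringconcat}(i) into the known $\sigma^k$-branching wildcard-indexing framework of the cited literature, and your argument is an explicit reconstruction of exactly that framework (match each wildcard-free segment once in $O(p)$ total time, then stitch instantiated prefixes across wildcard assignments using $O(1)$-time concatenation queries, for $O(\sigma^k)$ total). The bookkeeping you defer (empty segments between consecutive wildcards, the $O(\mathit{occ})$ term for reporting) is handled identically in the referenced solutions and does not affect correctness.
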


\noindent This improves the running time of fastest linear space solution by a factor $\log \log t$ at the cost of increasing the space slightly by a factor $\log^\epsilon t$. See \cite{LNV2014} for detailed overview of the known results. 

\subsection{Extensions}
Finally, we present two extensions of the dynamic relative compression problem. 

\subsubsection{Dynamic Relative Compression with Access and Replace}
If we restrict the operations to $\sacc$ and $\srep$ we obtain the following improved bound. 

\begin{theorem}\label{thm:extension1}
Let $R$ and $S$ be a reference and source string of lengths $r$ and $N$, respectively, and let $n$ be the length of the optimal substring cover of $S$ by $R$. Then, we can solve the dynamic relative compression problem supporting $\sacc$ and $\srep$ 
in $O(n + r)$ space and $O(\log\log N)$ expected time.
\end{theorem}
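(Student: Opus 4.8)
The plan is to maintain the substring cover $C$ in a balanced search tree keyed by position in $S$, and to exploit the fact that $\sacc$ and $\srep$ never change the length of $S$. Since $\sins$ and $\sdel$ are forbidden, the positions of the cover phrases are static between replace operations that do not split a phrase, so the heavy machinery of dynamic partial sums (needed to handle shifting positions in \autoref{thm:main}) can be replaced by a much lighter structure. First I would store the $n$ cover elements $(i_k,j_k)$ together with their cumulative lengths, so that $\sacc(i)$ reduces to locating which phrase contains position $i$ via a predecessor search over the phrase boundaries, and then indexing into $R$. The target time of $O(\log\log N)$ strongly suggests using a predecessor structure over a static universe of size $N$: because no insertions or deletions of characters occur, the set of phrase-start positions changes only by the local surgery caused by $\srep$, and a van~Emde~Boas--style or $y$-fast-trie structure supports predecessor in $O(\log\log N)$ expected time while allowing the few updates a replace triggers.

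The core of the argument is handling $\srep(i,\alpha)$ while keeping $|C|$ close to optimal. A replacement can break the phrase containing position $i$ into up to three pieces: the prefix before $i$, a new single-character phrase that must match $\alpha$ somewhere in $R$, and the suffix after $i$. The key step is to find, in $O(\log\log N)$ time, a reference occurrence for the newly created character $\alpha$ (a single-character lookup in $R$, which can be precomputed in a table indexed by the alphabet), and to merge the new fragment with its neighbours whenever doing so keeps the cover valid. To bound the cover size I would invoke a local greedy re-optimization: after the surgery, attempt to extend the altered phrases maximally against $R$ using the substring-concatenation primitive from \autoref{thm:substringconcat}, which lets us test in essentially constant or $O(\log\log r)$ time whether two adjacent phrases' concatenation is a substring of $R$ and hence can be fused. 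This guarantees that each $\srep$ changes the cover by only $O(1)$ phrases and that the maintained cover stays within a constant factor of optimal, so the space remains $O(n+r)$.

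For the data-structure layout I would keep: (a) the suffix tree or suffix array of $R$ preprocessed for substring concatenation as in \autoref{thm:substringconcat}, using $O(r)$ space; (b) the ordered collection of phrases in a structure supporting predecessor-by-position and rank in $O(\log\log N)$ expected time, using $O(n)$ space; and (c) the single-character reference table for $R$ of size $O(\sigma)\subseteq O(r)$. An $\sacc(i)$ query does one predecessor search to find the containing phrase, computes the offset, and reads the corresponding character of $R$, all in $O(\log\log N)$ time. An $\srep$ does a predecessor search, performs $O(1)$ phrase surgeries, issues $O(1)$ substring-concatenation queries to re-fuse, and performs $O(1)$ updates to the predecessor structure, again within $O(\log\log N)$ expected time.

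The hard part will be two-fold. First, proving that the local greedy merging after a replace never lets the maintained cover drift away from optimal by more than a constant factor: one must argue that each character replacement can increase the optimal cover size by at most a bounded amount and that the local fusion recovers any easy savings, so an amortized or potential-based accounting bound on $|C|$ versus $n$ is needed. Second, because positions in $S$ are fixed under $\sacc$/$\srep$ but the \emph{phrase boundaries} still shift by $O(1)$ per replace, I must ensure the predecessor structure tolerates these sparse, localized updates at $O(\log\log N)$ expected cost; here I would rely on a dynamic predecessor structure over a bounded universe achieving $O(\log\log N)$ expected time per operation, and verify that the number of structural changes per $\srep$ is indeed $O(1)$ so the expected bound is preserved.
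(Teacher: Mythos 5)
Your overall plan coincides with the paper's: replace the dynamic partial-sums machinery by a van Emde Boas / y-fast predecessor structure over the prefix sums of the block lengths (whose universe $[1,N]$ is static because $\sins$/$\sdel$ are disallowed), perform $O(1)$ local block surgery per $\srep$ using a precomputed table of character occurrences in $R$, and re-fuse adjacent blocks with substring concatenation queries. However, two points you leave open are genuine gaps in your write-up, and both have clean resolutions. First, no amortized or potential-based accounting of $|C|$ against the optimum is needed, and bounding ``how much one replacement can change the optimal cover size'' is the wrong track. The invariant to maintain is that the cover is \emph{maximal} (no two adjacent blocks concatenate to a substring of $R$); Lemma~\ref{lem:maxcover} then gives $|C|\leq 2n-1$ against the \emph{current} optimum at every point in time, which is what the $O(n+r)$ space bound needs. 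Your ``local greedy re-optimization'' is exactly the restoration of this invariant, and the fact that $O(1)$ concatenation queries suffice is a locality argument (Section~\ref{sec:DRC}): after splitting $b_l$ into at most three pieces, only the adjacent pairs among $b_{l-1}, b_l^1, b_l^2, b_l^3, b_{l+1}$ can be non-maximal, since every other adjacent pair is unchanged and was already unmergeable. This is a worst-case invariant, not an amortized one.

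Second, your time accounting silently equates $O(\log\log r)$ with $O(\log\log N)$. The linear-space substring concatenation structure (Theorem~\ref{thm:substringconcat}(ii)) answers queries in $O(\log\log r)$ time, and if $r$ is much larger than $N$ (say $r = 2^N$), this is \emph{not} $O(\log\log N)$, so the bound you claim does not follow from the steps you give. The constant-time concatenation structure of Theorem~\ref{thm:substringconcat}(i) would fix the time but break your $O(n+r)$ space bound. The paper closes this by exploiting, once more, that $N$ never changes in this variant, so one may fix a case distinction between $r$ and $N$ at construction time: when $r \geq N$ one can abandon compression entirely and store $S$ verbatim in $O(N)\subseteq O(r)$ space with $O(1)$ worst-case time per $\sacc$ and $\srep$, and otherwise $r < N$, whence $O(\log\log N + \log\log r) = O(\log\log N)$. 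Some argument of this kind must be added to your proof for the stated bound to hold.
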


\noindent This version of dynamic relative compression is a key component in the \emph{dynamic text and static pattern matching problem}, where the goal is to efficiently maintain a set of occurrences of a pattern $P$ in a text $T$ that is dynamically updated by changing individual characters. Let $p$ and $t$ denote the lengths of $P$ and $T$, respectively. Amir~et~al.~\cite{amir2007dynamic} gave a data structure using $O(t+p\sqrt{\log p})$ space which supports updates in $O(\log \log p)$ time. The computational bottleneck in the update operation is to update a substring cover of size $O(p)$. Plugging in the bounds from Theorem~\ref{thm:extension1}, we immediately obtain the following improved bound.

\begin{corollary}
Given a pattern $P$ and text $T$ of lengths $p$ and $t$, respectively, we can solve the dynamic text and static pattern matching problem in $O(t + p)$ space and $O(\log\log p)$ expected time per update.
\end{corollary}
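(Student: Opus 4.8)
The plan is to reuse the reduction of Amir~et~al.~\cite{amir2007dynamic} from dynamic text and static pattern matching to dynamic relative compression, and simply to replace their substring-cover data structure with the one from \autoref{thm:extension1}. First I would recall their framework: the static pattern $P$ plays the role of the reference string (so $r = O(p)$), and the data structure maintains a substring cover of size $n = O(p)$ of a source string $S$ of length $N = O(p)$ derived from the pattern together with the relevant portion of the text. A single-character change to the dynamic text $T$ triggers only a constant number of modifications to this cover, and each such modification is a character replacement rather than an insertion or deletion; answering the associated pattern-matching queries after an update requires only random access into the covered string. Hence the cover is maintained under $\sacc$ and $\srep$ exclusively, which is precisely the restricted setting handled by \autoref{thm:extension1}.

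Next I would separate the two space contributions. Storing the text $T$ itself, needed both to answer access queries and to translate text positions into positions of the covered string, takes $O(t)$ space and is untouched by the improvement. Everything else in the update machinery is encapsulated in the substring-cover structure. In the original solution this structure was built on the substring concatenation data structure of Amir~et~al., which costs $O(r\sqrt{\log r}) = O(p\sqrt{\log p})$ space; this is exactly the term we aim to eliminate.

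I would then instantiate \autoref{thm:extension1} with $r = O(p)$, $N = O(p)$, and $n = O(p)$. It supplies the cover structure in $O(n + r) = O(p)$ space and supports each $\sacc$ and $\srep$ in $O(\log \log N) = O(\log \log p)$ expected time. Combining this with the $O(t)$ space for the text gives total space $O(t + p)$, and, since each update performs only $O(1)$ cover operations, the expected update time is $O(\log \log p)$, as claimed.

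The main point to verify, and the only real obstacle, is that the reduction genuinely confines itself to $\sacc$ and $\srep$ and that all three parameters $r$, $N$, and $n$ are $O(p)$, so that the factor $\log \log N$ collapses to $\log \log p$ and the additional space is linear in $p$ rather than $p\sqrt{\log p}$. One must also check that swapping in the new cover structure leaves the remaining components of the update procedure of Amir~et~al.\ intact, and that those components already run within $O(\log \log p)$ expected time, so that the overall update cost is dominated by the improved cover update.
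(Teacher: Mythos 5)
Your proposal is correct and is essentially identical to the paper's own (very brief) justification: the paper likewise notes that in Amir et al.'s framework each text update reduces to $O(1)$ operations ($\sacc$/$\srep$ only) on a substring cover of size $O(p)$ with the pattern as reference, so that $r$, $N$, and $n$ are all $O(p)$, and then plugs in Theorem~\ref{thm:extension1} to replace the $O(p\sqrt{\log p})$-space substring-concatenation component, yielding $O(t+p)$ space and $O(\log\log p)$ expected time per update. Your write-up simply makes explicit the verification points that the paper leaves implicit.
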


\noindent Hence, we match the previous time bound while improving the space to linear. 

\subsubsection{Dynamic Relative Compression with Split and Concatenate}
We also consider maintaining a set of compressed strings under split and concatenate operations (as in Alstrup et al.~\cite{ABR2000}). Let $R$ be a reference string and let $\stringset=\{S_1,\ldots,S_k\}$ be a set of strings compressed relative to $R$. In addition to $\sacc$, $\srep$, $\sins$ and $\sdel$ we also define the following operations. 
\begin{description}
\item[$\quad\sconcat(i,j)$:] Add string $S_i \cdot S_j$ to $\stringset$ and remove $S_i$ and $S_j$. 
\item[$\quad\ssplit(i,j)$:] Remove $S_i$ from $\stringset$ and add $S_i[1,j-1]$ and $S_i[j,|S_i|]$.
\end{description}

\noindent We obtain the following bounds. 
\begin{theorem}\label{thm:extension2}
Let $R$ be a reference string of length $r$, let $\stringset=\{S_1,\ldots,S_k\}$ be a set of source strings of total length $N$, and let $n$ be the total length of the optimal substring covers of the strings in $\stringset$. Then, we can solve the dynamic relative compression problem supporting $\sacc$, $\srep$, $\sins$, $\sdel$, $\ssplit$, and $\sconcat$,
\begin{itemize}
\item[(i)] in space $O(n + r)$ and time $O(\log n)$ for $\sacc$ and time $O(\log n + \log\log r)$ for $\srep$, $\sins$, $\sdel$, $\ssplit$, and $\sconcat$, or
\item[(ii)] in space $O(n + r\log^\epsilon r)$ and time $O(\log n)$ for all operations.
\end{itemize}
\end{theorem}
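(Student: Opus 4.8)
The plan is to reuse the re-covering strategy behind Theorem~\ref{thm:main} but to replace its dynamic partial sums component---which does not support splitting or joining whole arrays---by a balanced search tree that does, at the cost of the word-parallel speedup (hence $O(\log n)$ rather than $O(\log n/\log\log n)$). Concretely, I would represent each source string $S_i$ by a balanced tree $T_i$ (e.g.\ the split/join tree of Alstrup et al.~\cite{ABR2000}) whose leaves, read left to right, are the phrases $(i_k,j_k)$ of a substring cover of $S_i$, a phrase denoting the substring $R[i_k,j_k]$. Each node stores the total length $\sum(j_k-i_k+1)$ of the phrases in its subtree. Since the covers have $O(n)$ phrases in total, every tree has height $O(\log n)$. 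To access position $p$ of $S_i$ I descend $T_i$ from the root, using the subtree-length fields to locate the leaf containing $p$ and the offset within it, and read the corresponding character of $R$; this costs $O(\log n)$ and uses no substring concatenation, which is why $\sacc$ is cheaper than the modifying operations in branch (i).

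The invariant I would maintain is local maximality: for consecutive phrases, $R[i_k,j_k]R[i_{k+1},j_{k+1}]$ is never a substring of $R$. A counting argument (at most two maintained phrases can start inside one phrase of any fixed cover, else two of them would be mergeable) shows that a locally maximal cover has size at most twice the optimal, so the maintained cover is of size $O(n)$ and all logarithms stay $O(\log n)$. Maximality is restored locally after each update. For $\srep$, $\sins$, $\sdel$ I locate the phrase containing the edited position, cut it into $O(1)$ pieces (inserting a length-one phrase for a newly written character, which must occur in $R$ for a cover to exist), and then test the $O(1)$ new boundaries, merging a pair whenever a single substring concatenation query (Theorem~\ref{thm:substringconcat}) reports that its concatenation occurs in $R$; the query also returns a start position in $R$, giving the index pair of the merged phrase. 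Each such step performs $O(1)$ leaf insertions and deletions in $T_i$, each in $O(\log n)$ time.

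For $\sconcat(i,j)$ I would join $T_i$ and $T_j$ and then examine the single new boundary between the last phrase of $S_i$ and the first phrase of $S_j$, merging the two phrases if a substring concatenation query reports their concatenation in $R$. For $\ssplit(i,j)$ I first locate, as in $\sacc$, the phrase of $S_i$ holding position $j$; if $j$ lies strictly inside a phrase $R[a,b]$ I replace it by $R[a,c]$ and $R[c{+}1,b]$ in $O(1)$, and then split $T_i$ at that leaf boundary. A split only cuts and never creates a mergeable boundary, so maximality is preserved automatically. In both operations the optimal cover size changes by $O(1)$, so $n$ is tracked up to a constant factor, and the dominant cost is the $O(\log n)$ tree join or split.

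The two branches then follow by plugging in the two substring concatenation structures of Theorem~\ref{thm:substringconcat}. The $O(r)$-space, $O(\log\log r)$-time structure gives branch (i): $\sacc$ needs no concatenation query and costs $O(\log n)$, whereas each modifying operation adds $O(1)$ concatenation queries to its $O(\log n)$ tree work, for $O(\log n + \log\log r)$, in total space $O(n+r)$. The $O(r\log^\epsilon r)$-space, $O(1)$-time structure gives branch (ii), where concatenation queries are absorbed into the $O(\log n)$ tree cost and every operation runs in $O(\log n)$ time and $O(n+r\log^\epsilon r)$ space. The step I expect to be most delicate is confirming that restoring maximality never cascades beyond $O(1)$ phrases: the key observation is that if $X\cdot Y$ is not a substring of $R$ then neither is $X\cdot Y'$ for any $Y'$ having $Y$ as a prefix, so extending a phrase on the far side of an already non-mergeable boundary keeps it non-mergeable, which confines every re-merge triggered by an edit or by $\sconcat$ to a constant number of boundaries.
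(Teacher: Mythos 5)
Your overall architecture coincides with the paper's: a balanced search tree per string whose leaves are the blocks of a maximal cover, subtree-length fields for navigation, the counting lemma (the paper's Lemma~\ref{lem:maxcover}) giving a 2-approximation of the optimal cover, $O(1)$ substring concatenation queries to repair maximality after updates, and the two branches obtained by plugging in the two structures of Theorem~\ref{thm:substringconcat}. However, there is one genuine gap: your claim that ``a split only cuts and never creates a mergeable boundary, so maximality is preserved automatically'' is false, and your size invariant collapses with it.

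The monotonicity you invoke points the wrong way for $\ssplit$. It is true that if $X\cdot Y$ does not occur in $R$, then neither does $X\cdot Y'$ for any $Y'$ having $Y$ as a prefix, so \emph{extending} a phrase across a non-mergeable boundary keeps it non-mergeable; but a split \emph{truncates} a phrase, and truncation does not preserve non-mergeability. Concretely, let $R=\texttt{aab}$ and let the maintained cover of $S_i=\texttt{aaa}$ be $(\texttt{aa})(\texttt{a})$, which is maximal since $\texttt{aaa}$ does not occur in $R$. Splitting off the first character cuts the phrase $\texttt{aa}$ into $\texttt{a}$ and $\texttt{a}$, and the right-hand string $\texttt{aa}$ is left with the cover $(\texttt{a})(\texttt{a})$, which is not maximal because $\texttt{aa}$ occurs in $R$. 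In general both strings produced by $\ssplit$ can acquire a mergeable boundary (between the truncated piece and its surviving neighbor), and your algorithm never re-examines these boundaries: a later $\sconcat$ inspects only the junction boundary. Hence the maximality invariant is lost, Lemma~\ref{lem:maxcover} no longer applies, and the $O(n)$ bound on the maintained cover size --- on which both the $O(n+r)$ space and the $O(\log n)$ tree height rest --- is unproven; you would instead have to show that such defects cannot accumulate over long sequences of splits and concatenations, which your proposal does not attempt. The fix is what the paper does: after splitting the tree, spend $O(1)$ substring concatenation queries on the two affected boundaries and merge where possible; since merging only extends phrases, your own observation shows this cannot cascade. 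This is also consistent with the theorem statement itself, which charges $\ssplit$ time $O(\log n+\log\log r)$ in branch (i) rather than the $O(\log n)$ your query-free split would yield --- a sign that concatenation queries are unavoidable there. The same truncation issue means that for $\srep$, $\sins$, and $\sdel$ you must explicitly test the boundaries between the truncated pieces and the \emph{old} neighboring phrases (as the paper does for all consecutive pairs among the five affected blocks), not only the boundaries between newly created pieces.
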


\noindent Hence, compared to the bounds in Theorem~\ref{thm:main} we only increase the time bounds by an additional $\log \log n$ factor.

\section{Dynamic Relative Compression}\label{sec:DRC}
In this section we show how Theorems~\ref{thm:partialsums} and \ref{thm:substringconcat} lead to Theorem~\ref{thm:main}. The proofs of Theorems~\ref{thm:partialsums} and \ref{thm:substringconcat} appear in Section~\ref{sec:dynps} and Section~\ref{app:substringconcat}, respectively. 

Let $C=((i_1,j_1),...,(i_{|C|},j_{|C|}))$ be the compressed representation of $S$. From now on, we refer to $C$ as the \emph{cover of $S$}, and call each element $(i_l,j_l)$ in $C$ a \emph{block}. Recall that a block $(i_l,j_l)$ refers to a substring $R[i_l,j_l]$ of $R$.
A cover $C$ is \emph{maximal} if 
concatenating any two consecutive blocks $(i_l,j_l),(i_{l+1},j_{l+1})$ in $C$ yields a string that does not occur in $R$, 
i.e., the string $R[i_l,j_l]R[i_{l+1},j_{l+1}]$ is not a substring of $R$. We need the following lemma.

\begin{lemma}\label{lem:maxcover}
If $C_\textsc{max}$ is a maximal cover and $C$ is an arbitrary cover of $S$, then $|C_\textsc{max}| \leq 2|C|-1$.
\end{lemma}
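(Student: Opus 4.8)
The plan is to view both covers as partitions of the positions of $S$ into consecutive intervals, one interval per block, and to compare them through their boundaries. Write the blocks of the maximal cover in left-to-right order as $M_1,\dots,M_m$ with $m=|C_\textsc{max}|$, and let the internal boundaries of $C$ (the $|C|-1$ positions in $S$ at which consecutive blocks of $C$ meet) be the objects we charge against. I would classify each block $M_i$ as either \emph{spanning}, meaning its interior strictly contains at least one internal boundary of $C$, or \emph{contained}, meaning $M_i$ lies entirely within a single block of $C$. The goal is to bound the number of spanning blocks by $|C|-1$ and the number of contained blocks by $|C|$, which sum to $2|C|-1$.

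First I would dispose of the spanning blocks: since the blocks $M_1,\dots,M_m$ have pairwise disjoint interiors, each internal boundary of $C$ can lie in the interior of at most one $M_i$, so there are at most $|C|-1$ spanning blocks. The heart of the argument is the bound on contained blocks, and this is where maximality enters. I would prove the key claim that two \emph{consecutive} contained blocks $M_i,M_{i+1}$ cannot lie in the same block of $C$: if they did, then $M_i\cup M_{i+1}$ would be contained in a single block $R[i_l,j_l]$ of $C$, and hence the string $M_i\cdot M_{i+1}$ would be a substring of $R$, contradicting that $C_\textsc{max}$ is maximal. From this I would deduce that the map sending each contained block to the (unique) block of $C$ containing it is injective: if two contained blocks $M_i$ and $M_j$ with $i<j$ mapped to the same block of $C$, then every block strictly between them, being an interval nested between $M_i$ and $M_j$ inside one block of $C$, would also be contained in that same block, forcing two \emph{consecutive} contained blocks into a single block of $C$ and contradicting the claim. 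Injectivity then bounds the number of contained blocks by the number of blocks of $C$, namely $|C|$.

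Adding the two estimates gives $|C_\textsc{max}| = m \le (|C|-1) + |C| = 2|C|-1$, as desired. The main obstacle is the second step: phrasing the charging so that each block of $C$ is accounted for at most twice — once as the host of a spanning block crossing its right boundary, and once as the host of a contained block — and making sure the maximality argument is applied to the correct consecutive pair. A minor point to check along the way is that all blocks are nonempty, so that the interval $M_i\cup M_{i+1}$ genuinely straddles the boundary between $M_i$ and $M_{i+1}$ and the classification into spanning and contained blocks is exhaustive.
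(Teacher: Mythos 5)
Your proof is correct, and it uses the same crucial consequence of maximality as the paper does — namely, that two consecutive blocks of $C_\textsc{max}$ cannot both lie inside a single block of $C$, since their concatenation would then occur in $R$ — but the counting is assembled differently. The paper charges each block of $C_\textsc{max}$ to the block of $C$ in which it \emph{starts}: at most two starts fit in any block of $C$ (three starts would force two adjacent blocks of $C_\textsc{max}$ entirely inside it), and the last block of $C$ admits at most one start because both covers end at the last position of $S$; this gives $2(|C|-1)+1 = 2|C|-1$. You instead split the blocks of $C_\textsc{max}$ into \emph{spanning} blocks (charged injectively to the $|C|-1$ internal boundaries of $C$ that they straddle) and \emph{contained} blocks (charged injectively to the $|C|$ blocks of $C$, with injectivity derived from the consecutive-pair claim), giving $(|C|-1)+|C|$. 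Your decomposition buys a slightly cleaner argument in one respect: the $-1$ falls out of the boundary count automatically, so you avoid the paper's special-case analysis of the last block of $C$. The paper's version is shorter, but your injectivity step (two non-consecutive contained blocks in the same host would trap consecutive contained blocks between them) is exactly the detail that the paper's phrase ``otherwise two adjacent blocks would be entirely contained'' leaves implicit, so your write-up is arguably the more complete one.
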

\begin{proof}
  In each block $b$ of $C$ there can start at most two blocks in $C_\textsc{max}$, because otherwise two adjacent blocks in $C_\textsc{max}$ would be entirely contained in the block $b$, contradicting the maximality of $C_\textsc{max}$. Since the last block of both $C$ and $C_\textsc{max}$ end at the last position of $S$, a contradiction of the maximality is already obtained when more than one block of $C_\textsc{max}$ start in the last block of $C$. Hence, $|C_\textsc{max}| \leq 2|C|-1$.
\end{proof}

\noindent Recall that $n$ is the size of an optimal cover of $S$ with regards to $R$. 
The lemma implies that we can maintain a compression of size at most $2n-1$ by maintaining a maximal cover of $S$.
The remainder of this section describes our data structure for maintaining and accessing such a cover.

Initially, we can use the suffix tree of $R$ to construct a maximal cover of $S$ in $O(N+r)$ time by greedily matching the maximal prefix of the remaining part of $S$ with any suffix of $R$. This guarantees that the blocks constitute a maximal cover of $S$.

\subsection{Data Structure}
The high level idea for supporting the operations on $S$ is to store the sequence of block lengths $j_1-i_1+1, \ldots, j_{|C|}-i_{|C|}+1$ in a dynamic partial sums data structure. This allows us, for example, to identify the block that encodes the $k^\text{th}$ character in $S$ by performing a $\psearch(k)$ query.

Updates to $S$ are implemented by splitting a block in $C$. This may break the maximality property so we use substring concatenation queries on $R$ to detect if blocks can be merged. We only need a constant number of substring concatenation queries to restore maximality. To maintain the correct sequence of block lengths we use $\pupdate$, $\pentrydivide$ and $\pentrymerge$ operations on the dynamic partial sums data structure.

Our data structure consist of the string $R$, a substring concatenation data structure of Theorem~\ref{thm:substringconcat} for $R$, a maximal cover $C$ for $S$ stored in a doubly linked list, and the dynamic partial sums data structure of Theorem~\ref{thm:partialsums} storing the block lengths of $C$. We also store auxiliary links between a block in the doubly linked list and the corresponding block length in the partial sums data structure, and a list of alphabet symbols in $R$ with the location of an occurrence for each symbol.
By Lemma~\ref{lem:maxcover} and since $C$ is maximal we have $|C| \leq 2n-1 = O(n)$. Hence, the total space for $C$ and the partial sums data structure is $O(n)$. The space for $R$ is $O(r)$ and the space for substring concatenation data structure is either $O(r)$ or $O(r \log^\epsilon r)$ depending on the choice in Lemma~\ref{thm:substringconcat}. Hence, in total we use either $O(n + r)$ or $O(n +r \log^\epsilon r)$ space.

\subsection{Answering Queries}
To answer $\sacc(i)$ queries we first compute $\psearch(i)$ in the dynamic partial sums structure to identify the block $b_l = (i_l,j_l)$ containing position $i$ in $S$. The local index in $R[i_l,j_l]$ of the $i^\text{th}$ character in $R$ is $\ell = i - \psum(l-1)$, and thus the answer to the query is the character $R[i_l + \ell -1]$.

We perform $\srep$ and $\sdel$ by first identifying $b_l = (i_l,j_l)$ and $\ell$ as above. Then we partition $b_l$ into three new blocks 
${b^1_l}=(i_l, i_l+\ell-2)$, ${b^2_l} = (i_l+\ell-1, i_l+\ell-1)$, ${b^3_l}=(i_l+\ell, j_l)$ where $b^2_l$ is the single character block for index $i$ in $S$ that we must change. 
In $\srep$ we change ${b^2_l}$ to an index of an occurrence in $R$ of the new character (which we can find from the list of alphabet symbols), while we remove ${b^2_l}$ in $\sdel$. The new blocks and their neighbors, that is, $b_{l-1}$, ${b^1_l}$, ${b^2_l}$, ${b^3_l}$, and $b_{l+1}$ may now be non-maximal. To restore maximality we perform substring concatenation queries on each consecutive pair of these 5 blocks, and replace non-maximal blocks with merged maximal blocks.
All other blocks are still maximal, since the strings obtained by concatenating $b_{l'}$ with $b_{l'+1}$, for all $l'<l-1$ and all $l'>l$, was not present in $R$ before the change and is not present afterwards.
A similar idea is used by Amir~et~al.~\cite{amir2007dynamic}. We perform $\pupdate$, $\pentrydivide$ and $\pentrymerge$ operations to maintain the corresponding lengths in the dynamic partial sums data structure. The $\sins$ operation is similar, but inserts a new single character block between two parts of $b_l$ before restoring maximality. Observe that using $\updbit = O(1)$ bits in $\pupdate$ is sufficient to maintain the correct block lengths.

In total, each operation requires a constant number of substring concatenation queries and dynamic partial sums operations; the latter having time complexity $O(\log n/\log(w/\updbit))=O(\log n/\log\log n)$ as $w \geq \log n$ and $\updbit = O(1)$. Hence, the total time for each $\sacc$, $\srep$, $\sins$, and $\sdel$ operation is either $O(\log n/\log \log n + \log \log r)$ or $O(\log n/\log \log n)$ depending on the substring concatenation data structure used. In summary, this proves Theorem~\ref{thm:main}.

\section{Dynamic Partial Sums}\label{sec:dynps}

In this section we prove Theorem~\ref{thm:partialsums}. We support the operations $\pinsert(i, \updabs)$ and $\pdelete(i)$ on a sequence of $w$-bit integer keys by implementing them using $\pupdate$ and a $\pentrydivide$ or $\pentrymerge$ operation, respectively. This means that we support inserting or deleting keys with value at most $2^\updbit$.

We first solve the problem for small sequences. The general solution uses a standard reduction, storing $Z$ at the leaves of a B-tree of large outdegree. We use the solution for small sequences to navigate in the internal nodes of the B-tree.

\paragraph{Dynamic Integer Sets}
We need the following recent result due to Pătraşcu and Thorup~\cite{patrascu2014dynamic} on maintaining a set of integer keys $X$ under insertions and deletions.
The queries are as follows, where $q$ is an integer. The membership query $\tmember(q)$ returns true if $q \in X$, predecessor $\pred_X(q)$ returns the largest key $x \in X$ where $x < q$, and successor $\succ_X(q)$ returns the smallest key $x \in X$ where $x \geq q$. The rank $\trank_X(q)$ returns the number of keys in $X$ smaller than $q$, and $\tselect(i)$ returns the $i^\text{th}$ smallest key in $X$.

\begin{lemma}[Pătraşcu and Thorup~\cite{patrascu2014dynamic}]\label{lem:dynIS}
    There is a data structure for maintaining a dynamic set of $w^{O(1)}$ $w$-bit integers that supports insert, delete, membership, predecessor, successor, rank and select in constant time per operation.
    
\end{lemma}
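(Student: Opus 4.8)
The plan is to reduce the problem to a single constant-time \emph{dynamic fusion node} and then stack a constant number of them in a search tree. A dynamic fusion node stores a small set of at most $B = w^{\Theta(1)}$ keys in $O(1)$ words and supports all of the listed operations in constant time; the key observation is that when the whole set has size only $w^{O(1)}$, a tree of fan-out $B$ over these keys has height $\log_B w^{O(1)} = O(1)$, so descending it costs only constant time.

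First I would build the dynamic fusion node. Following the sketching idea behind fusion trees, I keep, for the current set of $B$ keys, the set of \emph{branching bit positions} of their compacted binary trie, compress each key to its bits at those positions (its \emph{sketch}), and pack all $B$ sketches into a single machine word. A $\pred$/$\trank$ query for $q$ is answered by sketching $q$, broadcasting it, and comparing it in parallel against all stored sketches with one multiplication and a mask, followed by a most-significant-set-bit computation to read off the answer; a final correction step recovers the true predecessor from the sketch predecessor in $O(1)$ time. Storing the keys in sorted order in a separate packed word yields $\tselect$ directly, and the rank within the node is read off from the same parallel comparison, giving $\trank$. Insertions and deletions update the branching-position set, the sketch word, and the sorted-key word with a constant number of packed shifts and masks, and $\tmember(q)$ is a special case of $\succ$.

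Next I would lift this to $w^{O(1)}$ keys with a B-tree of fan-out $\Theta(B)$ whose every node is a dynamic fusion node storing its splitters together with subtree sizes. Navigation ($\tmember$, $\pred_X$, $\succ_X$) descends the tree, spending $O(1)$ at each of the $O(1)$ levels; $\trank_X$ accumulates the subtree sizes recorded at each node along the path, and $\tselect$ uses these sizes to steer the descent. For insert and delete I descend to the affected leaf, update its fusion node, and rebalance by the usual B-tree splits and merges, each touching $O(1)$ fusion nodes per level and hence costing $O(1)$ in total.

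The hard part is the dynamic fusion node itself: keeping the sketches correct in constant time under updates requires maintaining the branching-position set as keys are inserted and deleted, including the constant-time computation of the most-significant differing bit between a new key and its neighbours and the in-word insertion of a new sketch, all without word-level loops. This is precisely the delicate word-parallel bookkeeping worked out by Pătraşcu and Thorup~\cite{patrascu2014dynamic}, whose construction I would invoke directly rather than reprove; the B-tree superstructure above is then routine.
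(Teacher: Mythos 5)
Your proposal is correct and takes essentially the same approach as the paper: the paper gives no proof of this lemma at all, stating it as a direct citation of Pătraşcu and Thorup~\cite{patrascu2014dynamic}, and your sketch (a dynamic fusion node on $w^{\Theta(1)}$ keys, lifted to the full set by a constant-height B-tree with subtree sizes for rank and select) is an accurate outline of how that cited construction yields the stated bounds. Since you likewise defer the crux—the constant-time word-parallel maintenance of the sketches under insertions and deletions—to the very same citation, there is no substantive difference between your treatment and the paper's.
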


\subsection{Dynamic Partial Sums for Small Sequences}
Let $Z$ be a sequence of at most $B \leq w^{O(1)}$ integer keys. We will show how to store $Z$ in linear space such that all dynamic partial sums operations can be performed in constant time. We let $Y$ be the sequence of prefix sums of $Z$, defined such that each key $Y[i]$ is the sum of the first $i$ keys in $Z$, i.e., $Y[i] = \sum_{j=1}^i Z[j]$. 
Observe that $\psum(i) = Y[i]$ and $\psearch(t)$ is the index of the successor of $t$ in $Y$. Our goal is to store and maintain a representation of $Y$ subject to the dynamic operations $\pupdate$, $\pentrydivide$ and $\pentrymerge$ in constant time per operation.

\subsubsection{The Scheme by Pătraşcu and Demaine}
We first review the solution to the static partial sums problem by Pătraşcu and Demaine~\cite{puaatracscu2004tight}, slightly simplified due to Lemma~\ref{lem:dynIS}. Our dynamic solution builds on this.

The entire data structure is rebuilt every $B$ operations as follows. We first partition $Y$ greedily into \emph{runs}. Two adjacent elements in $Y$ are in the same run if their difference is at most $B 2^\updbit$, and we call the first element of each run a \emph{representative} for all elements in the run. We use $\mathcal{R}$ to denote the sequence of representative values in $Y$ and $\trep(i)$ to be the index of the representative for element $Y[i]$ among the elements in $\mathcal{R}$.

We store $Y$ by splitting representatives and other elements into separate data structures: $\mathcal{I}$ and $\mathcal{R}$ store the representatives at the time of the last rebuild, while $\mathcal{U}$ stores each element in $Y$ as an offset to its representative value as well as updates since the last rebuild. We ensure $Y[i] = \mathcal{R}[\trep(i)]+\mathcal{U}[i]$ for any $i$ and can thus reconstruct the values of $Y$.

The representatives are stored as follows. $\mathcal{I}$ is the sequence of indices in $Y$ of the representatives and $\mathcal{R}$ is the sequence of representative values in $Y$. Both $\mathcal{I}$ and $\mathcal{R}$ are stored using the data structure of Lemma~\ref{lem:dynIS}. We can then define $\trep(i) = \trank_\mathcal{I}(\pred_\mathcal{I}(i))$ as the index of the representative for $i$ among all representatives, and use $\mathcal{R}[\trep(i)] = \tselect_\mathcal{R}(\trep(i))$ to get the value of the representative for $i$.

We store in $\mathcal{U}$ the current difference from each element to its representative, $\mathcal{U}[i] = Y[i] - \mathcal{R}[\trep(i)]$ (i.e. updates between rebuilds are applied to $\mathcal{U}$). The idea is to pack $\mathcal{U}$ into a single word of $B$ elements.
Observe that $\pupdate(i, \updabs)$ adds value $\updabs$ to all elements in $Y$ with index at least $i$. We can support this operation in constant time by adding to $\mathcal{U}$ a word that encodes $\updabs$ for those elements.
Since each difference between adjacent elements in a run is at most $B 2^\updbit$ and $|Y| = O(B)$, the maximum value in $\mathcal{U}$ after a rebuild is $O(B^2 2^\updbit)$. As $B$ updates of size $2^\updbit$ may be applied before a rebuild, the changed value at each element due to updates is $O(B 2^\updbit)$. So each element in $\mathcal{U}$ requires $O(\log B + \updbit)$ bits (including an overflow bit per element). Thus, $\mathcal{U}$ requires $O(B(\log B + \updbit))$ bits in total and can be packed in a single word for $B = O(\min \{w / \log w, w / \updbit \})$.

Between rebuilds the stored representatives are potentially outdated because updates may have changed their values. 
However, observe that the values of two consecutive representatives differ by more than $B 2^\updbit$ at the time of a rebuild, so the gap between two representatives cannot be closed by $B$ updates of $\updbit$ bits each (before the structure is rebuilt again). 
Hence, an answer to $\psearch(t)$ cannot drift much from the values stored by the representatives; it can only be in a constant number of runs, namely those with a representative value $\succ_\mathcal{R}(t)$ and its two neighboring runs.
In a run with representative value $v$, we find the smallest $j$ (inside the run) such that $\mathcal{U}[j] + v - t > 0$. The smallest $j$ found in all three runs is the answer to the $\psearch(t)$ query.
Thus, by rebuilding periodically, we only need to check a constant number of runs when answering a $\psearch(t)$ query.

On this structure, Pătraşcu and Demaine~\cite{puaatracscu2004tight} show that the operations $\psum$, $\psearch$ and $\pupdate$ can be supported in constant time each as follows:
\begin{description}
\item[$\psum(i)$:] return the sum of $\mathcal{R}[\trep(i)]$ and $\mathcal{U}[i]$. This takes constant time as $\mathcal{U}[i]$ is a field in a word and representatives are stored using Lemma~\ref{lem:dynIS}.
\item[$\psearch(t)$:] let $r_0 = \trank_\mathcal{R}(\succ_\mathcal{R}(t))$. We must find the smallest $j$ such that $\mathcal{U}[j] + R[r] - t > 0$ for $r \in \{r_0-1, r_0, r_0+1\}$, where $j$ is in run $r$.
We do this for each $r$ using standard word operations in constant time by adding $R[r]-t$ to all elements in $\mathcal{U}$, masking elements not in the run (outside indices $\tselect_\mathcal{I}(r)$ to $\tselect_\mathcal{I}(r+1)-1$, and counting the number of negative elements.
\item[$\pupdate(i, \updabs)$:] we do this in constant time by copying $\updabs$ to all fields $j \geq i$ by a multiplication and adding the result to $\mathcal{U}$. 
\end{description}
To count the number of negative elements or find the least significant bit in a word in constant time, we use the technique by Fredman and Willard~\cite{fredmanwillardfusion}.

Notice that rebuilding the data structure every $B$ operations takes $O(B)$ time, resulting in amortized constant time per operation. We can instead do this incrementally by a standard approach by Dietz~\cite{dietz1989optimal}, reducing the time per operation to worst case constant. 
The idea is to construct the new replacement data structure incrementally while using the old and complete data structure. 

\subsubsection{Efficient Support for $\pentrydivide$ and $\pentrymerge$}
We now show how to maintain the structure described above while supporting operations $\pentrydivide(i, t)$ and $\pentrymerge(i)$. An example supporting the following explanation is provided in Figure \ref{fig:exampleinit}.


\tikzset{ 
nodesrep/.style={
    rectangle,
    align=center,
    draw=black,
},
table1withindices/.style={
    matrix of nodes,
    row sep=1pt,
    column sep=-\pgflinewidth,
    font=\footnotesize,
    text depth=1pt,
    text height=5pt,
    text width=9pt,
    text centered,
    nodes={
        rectangle,
        draw=black,
        anchor=north,
    },
    nodes in empty cells,
    column 1/.style={
        nodes={
            draw=white,
            align=left
        }
    },
    row 1/.style={
        nodes={
          draw=white,
          font=\tiny
        }
    },
    row 4/.style={
      nodes={
        draw=white
      }
    }
}
}

\begin{figure}
\centering
\begin{tikzpicture}

\matrix (first) [table1withindices]
{
  & 1 & 2 & 3 & 4 & 5 & 6 & 7 & 8 & 9 & 10 & 11 &\node (a) {12}; & 13 & 14 & 15 & 16 & 17 & 18 & 19 \\
  $Z$   & 5 & 1 & 4 & 7 & 1 & 1 & 6 & 5 & 1 & 1 & 2 & 2 & 1 & 3 & 5 & 10 & 5 & 10 & 2 \\
  $Y$   &\node[fill=black!20] () {5}; & 6 & 10 & \node[fill=black!20] () {17}; & 18 & 19 & \node[fill=black!20] () {25}; & \node[fill=black!20] () {30}; & 31 & 32 & 34 & 36 & 37 & 40 & \node[fill=black!20] () {45}; & \node[fill=black!20] () {55}; & \node[fill=black!20] () {60}; & \node[fill=black!20] () {70}; & 72 \\
$\mathcal{R}$  & $\{5, 17, 25, 30, 45, 55, 60, 70\}$ \\
$\mathcal{U}$   & 0 & 1 & 5 & 0 & 1 & 2 & 0 & 0 & 1 & 2 & 4 & 6 & 7 & 10 & 0 & 0 & 0 & 0 & 2 \\
$\mathcal{B}$  & 1 & 0 & 0 & 1 & 0 & 0 & 1 & 1 & 0 & 0 & 0 & 0 & 0 & 0 & 1 & 1 & 1 & 1 & 0 \\
$\mathcal{C}$  & 1 & 1 & 1 & 2 & 2 & 2 & 3 & 4 & 4 & 4 & 4 & 4 & 4 & 4 & 5 & 6 & 7 & 8 & 8 \\
};
\draw (0,-2) node {\textbf{a)} The initial data structure constructed from $Z$.};
\end{tikzpicture}

\vskip 0.5cm

\begin{tikzpicture}

\matrix (first) [table1withindices]
{
  & 1 & 2 & 3 & 4 & 5 & 6 & 7 & 8 & \node (a) {9}; &\node (b) {10}; & 11 &12 & 13 & 14 & 15 & 16 & 17 & 18 & 19 & 20 \\
  $Z$   & 5 & 1 & 4 & 7 & 1 & 1 & 6 & 3 & 2 & 1 & 1 & 2 & 2 & 1 & 3 & 5 & 10 & 5 & 10 & 2 \\
  $Y$   & \node[fill=black!20] () {5}; & 6 & 10 & \node[fill=black!20] () {17}; & 18 & 19 & \node[fill=black!20] () {25}; & 28 & 30 & 31 & 32 & 34 & 36 & 37 & 40 & \node[fill=black!20] () {45}; & \node[fill=black!20] () {55}; & \node[fill=black!20] () {60}; & \node[fill=black!20] () {70}; & 72 \\
$\mathcal{R}$  & $\{5, 17, 25, 45, 55, 60, 70\}$ \\
$\mathcal{U}$   & 0 & 1 & 5 & 0 & 1 & 2 & 0 & 3 & 5 & 6 & 7 & 9 & 11 & 12 & 15 & 0 & 0 & 0 & 0 & 2 \\
$\mathcal{B}$  & 1 & 0 & 0 & 1 & 0 & 0 & 1 & 0 & 0 & 0 & 0 & 0 & 0 & 0 & 0 & 1 & 1 & 1 & 1 & 0 \\
$\mathcal{C}$  & 1 & 1 & 1 & 2 & 2 & 2 & 3 & 3 & 3 & 3 & 3 & 3 & 3 & 3 & 3 & 4 & 5 & 6 & 7 & 7 \\
};
\node[above=1em of a] (labela) {New index 9};
\node[right=1em of labela] (labelb) {Old index 9};
\draw [->,thick] (labela.south) -- (a.north);
\draw [->,thick] (labelb.south)-- (b.north east);
\draw (0,-2.8) node[text width=10.5cm] {\textbf{b)} The result of $\pdivide(8,3)$ on the structure of a). Representative value 30 was removed from $\mathcal{R}$. We shifted and updated $\mathcal{U}$, $\mathcal{B}$ and $\mathcal{C}$ to remove the old representative and accommodate for a new element with value 2.};
\end{tikzpicture}

\vskip 0.5cm

\begin{tikzpicture}
\matrix (first) [table1withindices]
{
  & 1 & 2 & 3 & 4 & 5 & 6 & 7 & 8 & 9 & 10 & 11 &\node (a) {12}; & 13 & 14 & 15 & 16 & 17 & 18 & 19 \\
  $Z$   & 5 & 1 & 4 & 7 & 1 & 1 & 6 & 3 & 2 & 1 & 1 & 4 & 1 & 3 & 5 & 10 & 5 & 10 & 2 \\
  $Y$   & \node[fill=black!20] () {5}; & 6 & 10 & \node[fill=black!20] () {17}; & 18 & 19 & \node[fill=black!20] () {25}; & 28 & 30 & 31 & 32 & 36 & 37 & 40 & \node[fill=black!20] () {45}; & \node[fill=black!20] () {55}; & \node[fill=black!20] () {60}; & \node[fill=black!20] () {70}; & 72 \\
$\mathcal{R}$  & $\{5, 17, 25, 45, 55, 60, 70\}$ \\
$\mathcal{U}$   & 0 & 1 & 5 & 0 & 1 & 2 & 0 & 3 & 5 & 6 & 7 & 11 & 12 & 15 & 0 & 0 & 0 & 0 & 2 \\
$\mathcal{B}$  & 1 & 0 & 0 & 1 & 0 & 0 & 1 & 0 & 0 & 0  & 0 & 0 & 0 & 0 & 1 & 1 & 1 & 1 & 0 \\
$\mathcal{C}$  & 1 & 1 & 1 & 2 & 2 & 2 & 3 & 3 & 3 & 3 & 3 & 3 & 3 & 3 & 4 & 5 & 6 & 7 & 7 \\
};
\node[above=1em of a] (labela) {Index containing the sum of the merged indices.};
\draw [->,thick] (labela.south) -- (a.north);
\draw (0,-2) node[text width=13cm, align=center] {\textbf{c)} The result of $\pmerge(12)$ on the structure of c).}; 
\end{tikzpicture}

\caption{Illustrating operations on the data structure with $B2^{\delta}=4$. a) shows the data structure immediately after a rebuild, b) shows the result of performing $\pdivide(8,3)$ on the structure of a), and c) shows the result of performing $\pmerge(12)$ on the structure of b).\label{fig:exampleinit}}
\end{figure}

Observe that the operations are only local: Splitting $Z[i]$ into two parts or merging $Z[i]$ and $Z[i+1]$ does not influence the precomputed values in $Y$ (besides adding/removing values for the divided/merged elements). We must update $\mathcal{I}$, $\mathcal{R}$ and $\mathcal{U}$ to reflect these local changes accordingly. Because a $\pentrydivide$ or $\pentrymerge$ operation may create new representatives between rebuilds with values that do not fit in $\mathcal{U}$, we change $\mathcal{I}$, $\mathcal{R}$ and $\mathcal{U}$ to reflect these new representatives by rebuilding the data structure locally. This is done as follows.

Consider the run representatives. Both $\pentrydivide(i, t)$ and $\pentrymerge(i)$ may require us to create a new run, combine two existing runs or remove a run.
In any case, we can find a replacement representative for each run affected. 
As the operations are only local, the replacement is either a divided or merged element, or one of the neighbors of the replaced representative. 
Replacing representatives may cause both indices and values for the stored representatives to change.
We use insertions and deletions on $\mathcal{R}$ to update representative values. 

Since the new operations change the indices of the elements, these changes must also be reflected in $\mathcal{I}$. For example, a $\pentrymerge(i)$ operation decrements the indices of all elements with index larger than $i$ compared to the indices stored at the time of the last rebuild
We should in principle adjust the $O(B)$ changed indices stored in $\mathcal{I}$. The cost of adjusting the indices accordingly when using Lemma~\ref{lem:dynIS} to store $\mathcal{I}$ is $O(B)$. 
Instead, to get our desired constant time bounds, we represent $\mathcal{I}$ using a resizable data structure with the same number of elements as $Y$ that supports this kind of update. We must support $\tselect_\mathcal{I}(i)$, $\trank_\mathcal{I}(q)$, and $\pred_\mathcal{I}(q)$ as well as inserting and deleting elements in constant time. Because $\mathcal{I}$ has few and small elements, we can support the operations in constant time by representing it using a bitstring $\mathcal{B}$ and a structure $\mathcal{C}$ which is the prefix sum over $\mathcal{B}$ as follows.

Let $\mathcal{B}$ be a bitstring of length $|Y|\leq B$, where $\mathcal{B}[i] = 1$ iff there is a representative at index $i$. $\mathcal{C}$ has $|Y|$ elements, where $\mathcal{C}[i]$ is the prefix sum of $\mathcal{B}$ including element $i$. Since $\mathcal{C}$ requires $O(B \log B)$ bits in total we can pack it in a single word. We answer queries as follows: $\trank_\mathcal{I}(q)$ equals $\mathcal{C}[q-1]$, we answer $\tselect_\mathcal{I}(i)$ by subtracting $i$ from all elements in $\mathcal{C}$ and return one plus the number of elements smaller than $0$ (as done in $\mathcal{U}$ when answering $\psearch$), and we find $\pred_\mathcal{I}(q)$ as the index of the least significant bit in $\mathcal{B}$ after having masked all indices larger than $q$.
Updates are performed as follows. Using mask, shift and concatenate operations, we can ensure that $\mathcal{B}$ and $\mathcal{C}$ have the same size as $Y$ at all times (we extend and shrink them when performing $\pentrydivide$ and $\pentrymerge$ operations). Inserting or deleting a representative is to set a bit in $\mathcal{B}$, and to keep $\mathcal{C}$ up to date, we employ the same $\pm 1$ update operation as used in $\mathcal{U}$.


We finally need to adjust the relative offsets of all elements with a changed representative in $\mathcal{U}$ (since they now belong to a representative with a different value).
In particular, if the representative for $\mathcal{U}[j]$ changed value from $v$ to $v'$, we must subtract $v' - v$ from $\mathcal{U}[j]$. 
This can be done for all affected elements belonging to a single representative simultaneously in $\mathcal{U}$ by a single addition with an appropriate bitmask ($\pupdate$ a range of $\mathcal{U}$). Note that we know the range of elements to update from the representative indices. Finally, we may need to insert or delete an element in $\mathcal{U}$, which can be done easily by mask, shift and concatenate operations on the word $\mathcal{U}$. This leads to Theorem \ref{thm:prefixsumssmall}.
%


\begin{theorem}\label{thm:prefixsumssmall}
    There is a linear space data structure for dynamic partial sums supporting each operation $\psearch$, $\psum$, $\pupdate$, $\pinsert$, $\pdelete$, $\pentrydivide$, and $\pentrymerge$ on a sequence of length $O(\min \{w / \log w, w / \updbit \})$ in worst-case constant time.
\end{theorem}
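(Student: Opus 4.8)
The plan is to assemble the components developed in this section and verify that each listed operation touches only a constant number of machine words together with a constant number of calls to the substructures, each running in constant time. Concretely, I would maintain the invariant $Y[i] = \mathcal{R}[\trep(i)] + \mathcal{U}[i]$ established at the most recent rebuild, storing the representative values $\mathcal{R}$ in the integer-set structure of Lemma~\ref{lem:dynIS}, encoding the representative indices $\mathcal{I}$ implicitly through the bitstring $\mathcal{B}$ and its packed prefix-sum word $\mathcal{C}$, and packing all offsets into the single word $\mathcal{U}$. Since $B = O(\min\{w/\log w, w/\updbit\})$, each of $\mathcal{U}$ and $\mathcal{C}$ fits in one word, so broadcasting a value to a suffix of fields, masking out a run, and counting negative fields are all single-word operations; these are exactly the tricks of Fredman and Willard, giving constant time.

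First I would dispatch the three operations already handled by P\u{a}tra\c{s}cu and Demaine. For $\psum(i)$ I return $\tselect_\mathcal{R}(\trep(i)) + \mathcal{U}[i]$, reading $\trep(i)$ off $\mathcal{B}$ and $\mathcal{C}$ and looking up the value via Lemma~\ref{lem:dynIS}; for $\pupdate(i,\updabs)$ I add a broadcast of $\updabs$ to all fields $j \geq i$ of $\mathcal{U}$; and for $\psearch(t)$ I inspect the $O(1)$ candidate runs around $\succ_\mathcal{R}(t)$, in each adding $\mathcal{R}[r]-t$ to $\mathcal{U}$, masking to the run using the endpoints read off $\mathcal{B}$ and $\mathcal{C}$, and counting negative fields. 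The periodic-rebuild argument guarantees the answer lies in one of these runs. The insertion and deletion operations then reduce to the structural ones: $\pinsert(i,\updabs)$ is a $\pentrydivide$ followed by a $\pupdate$, and $\pdelete(i)$ is a $\pentrymerge$, using that the value involved is at most $2^\updbit$.

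The heart of the argument is $\pentrydivide(i,t)$ and $\pentrymerge(i)$, where the sequence length changes and hence all representative indices past position $i$ shift by one---the main obstacle, since updating them one at a time in $\mathcal{I}$ would cost $\Theta(B)$. I would avoid this precisely through the $\mathcal{B}$/$\mathcal{C}$ representation: a shift is a single mask--shift--concatenate on the word $\mathcal{B}$, and keeping $\mathcal{C}$ consistent uses the same $\pm 1$ broadcast employed for $\mathcal{U}$, so the index bookkeeping is constant time. For values, the operation is local, so only a constant number of runs are created, merged, or destroyed; in each affected run I compute a replacement representative (a divided or merged element, or a neighbor of the removed one) and update $\mathcal{R}$ with $O(1)$ insertions and deletions from Lemma~\ref{lem:dynIS}. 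Since a replacement may change a representative value, I correct the offsets of every element of the affected run by a single ranged $\pupdate$ on $\mathcal{U}$, whose endpoints I read from the representative indices, and finally insert or delete the one changed field of $\mathcal{U}$ by mask--shift--concatenate.

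Two points close the proof. A $\pentrydivide$ or $\pentrymerge$ may create a representative whose offset no longer fits in the $O(\log B + \updbit)$-bit field of $\mathcal{U}$; I handle this by rebuilding only the affected run locally, which again spans $O(B)$ packed fields and so costs $O(1)$. Finally, the global rebuild every $B$ operations costs $O(B)$ and would yield only amortized constant time; to obtain worst-case constant time I build the replacement structure incrementally alongside the live one using the standard technique of Dietz~\cite{dietz1989optimal}. Summing up, every operation uses $O(1)$ word operations and $O(1)$ substructure queries, and all structures occupy linear space, which establishes Theorem~\ref{thm:prefixsumssmall}.
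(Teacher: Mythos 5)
Your proposal is correct and follows essentially the same route as the paper: the Pătraşcu--Demaine packed-word scheme with representatives in the structure of Lemma~\ref{lem:dynIS}, the replacement of $\mathcal{I}$ by the bitstring $\mathcal{B}$ with packed prefix sums $\mathcal{C}$ to make index shifts constant time, local rebuilding for runs whose offsets overflow, and Dietz's incremental rebuilding for worst-case bounds. No gaps; this matches the paper's argument in both structure and detail.
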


\subsection{Dynamic Partial Sums for Large Sequences}
Willard \cite{willard2000examining} (and implicitly Dietz~\cite{dietz1989optimal}) showed that a leaf-oriented B-tree with out-degree $B$ of height $h$ can be maintained in $O(h)$ worst-case time if: 1) searches, insertions and deletions take $O(1)$ time per node when no splits or merges occur, and 2) merging or splitting a node of size $B$ requires $O(B)$ time.

We use this as follows, where $Z$ is our integer sequence of length $s$. Create a leaf-oriented B-tree of degree $B = \Theta(\min \{w / \log w, w / \updbit \})$ storing $Z$ in the leaves, with height $h = O(\log_B n) = O(\log n / \log (w / \updbit))$.
Each node $v$ 
uses Theorem~\ref{thm:prefixsumssmall} to store the $O(B)$ sums of leaves in each of the subtrees of its children. Searching for $t$ in a node corresponds to finding the successor $Y[i]$ of $t$ among these sums. Dividing or merging elements in $Z$ corresponds to inserting or deleting a leaf. This concludes the proof of Theorem~\ref{thm:partialsums}.

\section{Substring Concatenation}\label{app:substringconcat}
In this section we prove Theorem~\ref{thm:substringconcat}. Recall that we must store a string $R$ subject to substring concatenation queries: given two strings $x$ and $y$ return the location of an occurrence of $xy$ in $R$ or \texttt{NO} if no such occurrence exist.



To prove (i) we need the following definitions. For a substring $x$ of $R$, let $S(x)$ denote the suffixes of $R$ that have $x$ as a prefix, and let $S'(x) = \{i+|x| \mid i \in S(x) \wedge i+|x| \leq n\}$, i.e., $S'(x)$ are the suffixes of $R$ that are immediately preceded by $x$. Hence for two substrings $x$ and $y$, the suffixes that have $xy$ as a prefix are exactly $S'(x) \cap S(y)$. 
We can reduce this intersection problem to a 1D range emptiness problem as follows.

Let $\rank(i)$ be the position of suffix $R[i..r]$ in the lexicographic ordering of all suffixes of $R$, and let $\rank(A) = \{\rank(i) \mid i \in A\}$ for $A \subseteq \{1..n\}$. Then $xy$ is a substring of $R$ if and only if $\rank(S'(x)) \cap \rank(S(y)) \not = \emptyset$. Note that $\rank(S(y))$ is a range $[a,b] \subseteq [1,n]$, and we can determine this range in constant time for any substring $y$ using a constant-time weighted ancestor query on the suffix tree of $R$~\cite{gawrychowski2014weighted}. Consequently, we can decide if $xy$ is a substring of $R$ by a 1D range emptiness query on the set $\rank(S'(x))$. 

Belazzougui~et~al.~\cite{belazzougui2010fast} (see also \cite{goswami2015}) recently gave a 1D range emptiness data structure for a set $A \subseteq [1,r]$ using $O(|A|\log^\epsilon r)$ \emph{bits} of space, for any constant $\epsilon > 0$, and answering queries in constant time. We will build this data structure for $\rank(S'(x))$, but doing so for all substrings would require space $\tilde \Omega(r^2)$. 

To arrive at the space bound of $O(r\log^\epsilon r)$ (words), we employ a heavy path decomposition~\cite{harel1984fast} on the suffix tree of $R$, and only build the data structure for substrings of $R$ that correspond to the top of a heavy path. In this way, each suffix will appear in at most $\log r$ such data structures, leading to the claimed $O(r\log^\epsilon r)$ space bound (in words). In addition, we build a $O(r)$-space nearest common ancestor data structure~\cite{harel1984fast} for the suffix tree of $R$. Constant-time nearest common ancestor queries will allow us to also answer longest common prefix  queries on $R$ in constant time. 

To answer a substring concatenation query with substrings $x$ and $y$, we first determine how far $y$ follows the heavy path in the suffix tree from the location where $x$ stops. This can be done in $O(1)$ time by a constant-time longest common prefix query between two suffixes of $R$. We then proceeed to the top of the next heavy path, where we query the 1D range reporting data structure with the range $\rank(S(y'))$ where $y'$ is the remaining unmatched suffix of $y$. This completes the query, and the proof of (i).

The second solution (ii) is an implication of a result by Bille~et~al.~\cite{BGVV2014}. Given the suffix tree $ST_R$ of $R$, an \emph{unrooted longest common prefix query}~\cite{cole2004dictionary} takes a suffix $y$ and a location $\ell$ in $ST_R$ (either a node or a position on an edge) and returns the location in $ST_S$ that is reached after matching $y$ starting from location $\ell$. A substring concatenation query is straightforward to implement using two unrooted longest common prefix queries, the first one starting at the root, and the second starting from the location returned by the first query. It follows from Bille~et~al.~\cite{BGVV2014} that we can build a linear space data structure that supports unrooted longest common prefix queries in time $O(\log \log r)$ thus completing the proof of (ii).

\section{Extensions}\label{sec:extensions}
In this section we show how to solve two other variants of the dynamic relative compression problem. We first prove Theorem~\ref{thm:extension1}, showing how to improve the query time if only supporting operations $\sacc$ and $\srep$. We then show Theorem~\ref{thm:extension2}, generalising the problem to support multiple strings. These data structures use the same substring concatenation data structure of Theorem~\ref{thm:substringconcat} as before but replaces the dynamic partial sums data structure.

\subsection{Dynamic Relative Compression with Access and Replace}
In this setting we constrain the operations on $S$ to $\sacc(i)$ and $\srep(i, \alpha)$.
Then, instead of maintaining a dynamic partial sums data structure over the lengths of the substrings in $C$, we only need a dynamic predecessor data structure over the prefix sums. 
The operations are implemented as before, except that for $\sacc(i)$ we obtain block $b_j$ by computing the predecessor of $i$ in the predecessor data structure, which also immediately gives us access to the local index in $b_j$. For $\srep(i, \alpha)$, a constant number of updates to the predecessor data structure is needed to reflect the changes. We use substring concatenation queries to restore maximality as described in Section~\ref{sec:DRC}. The prefix sums of the subsequent blocks in $C$ are preserved since $|b_j|=|b_j^1|+|b_j^2|+|b_j^3|$.

With a linear space implementation of the van Emde Boas data structure \cite{Boas1977, BKZ1977,MN1990} we can support the predecessor queries and updates in $O(\log \log N)$ expected time. For substring concatenation we apply Theorem~\ref{thm:substringconcat}(ii) using $O(r)$ space and $O(\log \log r)$. Since the length of source string does not change, we can always assume that $r > N$, and the total time becomes $O(\log \log N + \log \log r) = O(\log \log N)$. In summary, this proves Theorem~\ref{thm:extension1}.

\subsection{Dynamic Relative Compression with Split and Concatenate}
Consider the variant of the dynamic relative compression problem where we want to maintain a relative compression of a set of strings $S_1,\ldots,S_k$. Each string $S_i$ has a cover $C_i$ and all strings are compressed relative to the same string $R$. In this setting $n=\sum_{i=1}^k |C_i|$. In addition to the operations $\sacc$, $\srep$, $\sins$, and $\sdel$, we also want to support split and concatenation of strings. Note that the semantics of the operations change to indicate the string(s) to perform a given operation on. 

We build a leaf-oriented height-balanced binary tree $T_i$ (e.g. an AVL tree or red-black tree) over the blocks $C_i[1],\ldots,C_i[|C_i|]$ for each string $S_i$. In each internal node $v$, we store the sum of the block sizes represented by its leaves. Since the total number of blocks is $n$, the trees use $O(n)$ space. All operations rely on the standard procedures for searching, inserting, deleting, splitting and joining height-balanced binary trees. All of these run in $O(\log n)$ time for a tree of size $n$. See for example \cite{CLRS2001} for details on how red-black trees achieve this.

The answer to an $\sacc(i, j)$ query is found by doing a top-down search in $T_i$ using the sums of block sizes to navigate. Since the tree is balanced and the size of the cover is at most $n$, this takes $O(\log n)$ time. The operations $\srep(i,j,\alpha)$, $\sins(i,j,\alpha)$, and $\sdel(i,j)$ all initially require that we use $\sacc(i,j)$ to locate the block containing the $j$-th character of $S_i$. To reflect possible changes to the blocks of the cover, we need to modify the corresponding tree to contain more leaves and restore the balancing property. Since the number of nodes added to the tree is constant these operations each take $O(\log n)$ time. The $\sconcat(i,j)$ operation requires that we join two trees in the standard way and restore the balancing property of the resulting tree. For the $\ssplit(i,j)$ operation we first split the block that contains position $j$ such that the $j$-th character is the trailing character of a block. We then split the tree into two trees separated by the new block. This takes $O(\log n)$ time for a height-balanced tree.

To finalize the implementation of the operations, we must restore the maximality property of the affected covers as described in Section~\ref{sec:DRC}. At most a constant number of blocks are non-maximal as a result of any of the operations. If two blocks can be combined to one, we delete the leaf that represents the rightmost block, update the leftmost block to reflect the change, and restore the property that the tree is balanced. If the tree subsequently contains an internal node with only one child, we delete it and restore the balancing. Again, this takes $O(\log n)$ time for balanced trees, which concludes the proof of Theorem~\ref{thm:extension2}.

\section{Conclusion}

We have shown how to compress a text relatively to a reference string while supporting access to the text and a range of dynamic operations under some strong guarantees for the space usage and the query times. There are, however, room for improvement.

Our solution to DRC is built on data structures for the partial sums problem and the substring concatenation problem. Our partial sums-solution is optimal, but in order to get the desired constant query time for substring concatenation, our data structure uses $O(r\log^\epsilon r)$ space. As opposed to this, our linear space solution leads to $O(\log\log r)$ query time. We leave as an open problem if it is possible to get $O(1)$ time substring concatenation queries using $O(r)$ space, which will also carry over to a stronger result for the DRC problem.

Moreover, the size of the cover that is maintained by our DRC data structure is also an interesting parameter. Currently we maintain a 2-approximation of the optimal cover. It would be useful to know if a better approximation ratio can be maintained under the same (or better) time and space bounds that we give.

\paragraph{Acknowledgments}
We thank Pawel Gawrychowski for helpful discussions.

\bibliographystyle{abbrv}
\bibliography{references}

\end{document}